\documentclass[sigplan, screen]{acmart}
\AtBeginDocument{%
  }

\copyrightyear{2026}
\acmYear{2026}
\setcopyright{cc}
\setcctype{by}
\acmConference[HSCC '26]{29th ACM International Conference on Hybrid Systems: Computation and Control}{May 11--14, 2026}{Saint Malo, France}
\acmBooktitle{29th ACM International Conference on Hybrid Systems: Computation and Control (HSCC '26), May 11--14, 2026, Saint Malo, France}
\acmDOI{10.1145/3801146.3805671}
\acmISBN{979-8-4007-2566-1/2026/05}

\usepackage{amsmath}
\usepackage{url}

\theoremstyle{plain}
\newtheorem{theorem}{Theorem}[section]

\newtheorem{lemma}[theorem]{Lemma}
\newtheorem{corollary}[theorem]{Corollary}
\theoremstyle{definition}
\newtheorem{definition}[theorem]{Definition}
\newtheorem{assumption}[theorem]{Assumption}
\theoremstyle{remark}
\newtheorem{remark}[theorem]{Remark}

\usepackage{tikz}
\usetikzlibrary{calc}
\usetikzlibrary{arrows.meta, positioning}

\usepackage{xcolor}
\usepackage{booktabs}
\usepackage{graphicx}
\usepackage{comment}
\usepackage{braket}
\usepackage{multirow}
\usepackage{caption}
\usepackage{subcaption}
\usepackage{stmaryrd}
\usepackage{dblfloatfix}
\usepackage{makecell}
\usepackage{mathtools}

\newcommand{\Sadegh}[1]{{\color{magenta} \textbf{SS:} #1}}

\def\reals{\mathbb{R}}

\def\nats{\mathbb{N}}

\def\M{\mathcal{M}}

\def\Sh{{\hat S}}
\def\xh{{\hat x}}

\def\sigmahat{{\hat \Sigma}}
\def\encoder{{\Lambda_e}}
\def\decoder{{\Lambda_d}}

\def\ball{{\mathbb B}}
\def\xinf{{x_\infty}}
\def\Dinf{{D_\infty}}
\def\piinf{{\pi_\infty}}
\def\tick{{\mathsf{T}}}

\begin{document}

\author{Rupak Majumdar}
\affiliation{%
  \institution{Max Planck Institute for Software Systems}
  \country{Germany}}
\email{rupak@mpi-sws.org}

\author{Mahmoud Salamati}
\affiliation{%
  \institution{Max Planck Institute for Software Systems}
  \country{Germany}}
\email{msalamati@mpi-sws.org}

\author{Nikhil Singh}
\affiliation{%
  \institution{Max Planck Institute for Software Systems}
  \country{Germany}}
\email{niksingh@mpi-sws.org}

\author{Sadegh Soudjani}
\affiliation{%
  \institution{Max Planck Institute for Software Systems, Germany}
  \country{University of Birmingham, United Kingdom}
  }
\email{sadegh@mpi-sws.org}


\title{Learning Metastable Dynamics} 



    \begin{abstract}

\emph{Metastability}---a phenomenon where systems remain trapped in quasi-stable states before abruptly transitioning under rare perturbations---is ubiquitous in physical systems. Although metastability is a widely observed phenomenon, its identification and analysis present significant challenges. 
To address these challenges, we propose a novel framework for analyzing metastability using {K}oopman theory. We use a finite set of system trajectories to learn a representation of the dynamics that defines a latent space in which the system evolves linearly, thereby enabling a systematic characterization of metastable behavior through the spectral properties of the linear mapping. Empirical evaluations demonstrate that our approach is capable of anticipating metastable behavior significantly earlier than its actual manifestation, even with $10\%$ of the simulation duration. Moreover, we establish that the dominant eigenvalue of the learned {K}oopman matrix in the latent space serves as a critical indicator for detecting metastability across both single-server and multi-server configurations.

	\end{abstract}

\begin{CCSXML}
<ccs2012>
   <concept>
       <concept_id>10010147.10010341.10010366.10010369</concept_id>
       <concept_desc>Computing methodologies~Simulation tools</concept_desc>
       <concept_significance>500</concept_significance>
       </concept>
   <concept>
       <concept_id>10010147.10010341.10010342.10010343</concept_id>
       <concept_desc>Computing methodologies~Modeling methodologies</concept_desc>
       <concept_significance>500</concept_significance>
       </concept>
 </ccs2012>
\end{CCSXML}

\ccsdesc[500]{Computing methodologies~Simulation tools}
\ccsdesc[500]{Computing methodologies~Modeling methodologies}

\keywords{Metastability, Koopman Theory, Queuing Systems, Performance Evaluation}


\maketitle


	
\section{Introduction}\label{sec:intro}

Metastability is a
widespread phenomenon in physical systems \cite{StochasticThermodynamics,MolecularDrivingForces,FW}, such as magnetic hysteresis~\cite{Baranov2019magnetic} or condensation of over-saturated water vapor~\cite{Cho2014vapourcondense}, where a system remains “persistently” in one state and then rapidly transitions into another in the presence of some rare event. Metastability plays a crucial role in the design and operation of distributed server systems—the backbone of today’s cloud computing infrastructure~\cite{Huang2022systemmetastability}. 

A representative instance of metastable behavior in distributed systems is a retry storm at a server. Retries are a mechanism to deal with failures: if a request is not responded to within a certain timeout,
something went wrong and the client is advised to retry the request. While retries are an excellent
mechanism to mitigate transient failures, in rare occasions, they may form a sustaining effect:
the additional workload from retries prevents the system to respond to requests on time, thereby
leading to further client-side retries that increases the workload. In the worst case, 
the retry storm
propagates to multiple services,
 culminating in large-scale service degradation or complete loss of availability.

In metastable systems, the interplay of time scales is of paramount importance. A system may be stable, yet exhibit extremely long transients, meaning that trajectories starting from certain regions of the state space take a very long expected time to \emph{settle down}.
Detecting such behavior is computationally intensive as it necessitates the simulation of numerous long trajectories. 
Moreover, even extensive simulation-based analysis often fails to reliably capture or characterize metastable phenomena due to the rarity and complexity of the underlying dynamics.

Existing studies on metastability have predominantly concentrated on linear systems or relatively simple models such as Markov chains. In case of Markov chains, metastability can be investigated through spectral analysis of the transition probability matrix~\cite{MetastabilityBook}.
However, for stochastic nonlinear dynamical systems, there remains no general framework for systematic metastability analysis.
A further challenge arises in many real-world engineering domains, where suitable analytical models are often unavailable for such analysis. This motivates the need to develop a general framework capable of learning arbitrarily complex dynamics from trajectories of the real system or its simulators.

In this work, we present the first practical methodology for analyzing metastability in stochastic nonlinear  systems with unknown dynamics. Our approach comprises three key components
\begin{itemize}
    \item \textit{Reduction to deterministic dynamics:}   Assuming ergodicity, we demonstrate that metastability in the underlying stochastic system can be studied through the time-scale separation in the corresponding \emph{deterministic} dynamics of statistical moments (e.g., expectations); \item \textit{Learning Koopman autoencoders:} By exploiting Koopman theory, we use a finite number of system trajectories to learn a representation of the deterministic dynamics that induces a latent space in which the system evolves linearly; and 
    \item \textit{Spectral characterization: } We characterize metastability by analyzing the spectral gap of the learned linear operator, establishing a rigorous connection between metastability properties in the latent and original state spaces. 
\end{itemize}

Our experimental study focuses on \textit{queuing} systems with retry mechanisms. The empirical results demonstrate that the proposed method can anticipate the onset of metastable behavior substantially earlier than its observable manifestation—requiring only approximately $10\%$ of the total simulation duration to do so. Furthermore, we show that the dominant eigenvalue of the learned Koopman matrix in the latent space functions as a robust and interpretable indicator of metastability, consistently identifying transitions across both single-server and multi-server configurations.
   In addition, we show that the theoretically derived upper bounds on the settling time align closely with the empirically observed settling times in both the original and latent state spaces, thereby validating the analytical consistency of our framework.
   Through our experimental analysis, we further investigate effective mitigation strategies for preventing metastable failures, including the implementation of controlled throttling mechanisms and the substitution of  queues with stacks in conventional server architectures.

    \section{Related Work}
\noindent\textbf{Metastability in the sciences.}
Metastability has been studied formally in the context of perturbed dynamical systems \cite{FW}, magnetic hysteresis~\cite{Baranov2019magnetic} or the condensation of supersaturated water vapor~\cite{Cho2014vapourcondense}.
Within the framework of Markov processes, Bovier et al.~\citep{BEGK1,BEGK2002,MetastabilityBook} defined metastability for discrete-time Markov chains and characterize metastability using potential theory and spectral methods. The application of spectral methods to metastability originates from the seminal works of Davies~\cite{Davies1982a,Davies1982b,Davies1983}, whose results primarily pertain to reversible Markov chains — i.e., chains satisfying the detailed balance condition — with subsequent technical generalizations extending to non-reversible settings.
Building on these foundations,
Betz and Le Roux~\cite{Betz2016} studied metastability for perturbed Markov chains, considering the asymptotics of metastability as the perturbation parameter goes to zero \cite{Betz2016}.
Despite these significant advances, existing analyses remain limited in generality, often relying on restrictive assumptions such as asymptotic regimes or reversibility. In contrast, the present work aims to provide a unified framework that captures metastable behavior beyond these assumptions.

\noindent\textbf{Metastable failures in systems.}
Metastable failures are introduced in the context of computer systems by Bronson et al.~\citep{bronson2021metastable}, providing illustrative examples and informal definitions of this phenomenon. Further, they emphasize that such failures have been observed across various domains in systems and networking, often with different names, such as persistent congestion \citep{AWS2021_persistent},
retry storms \citep{Azure2021_storm}, or cascading failures \citep{cascade2016}.
Building upon this foundation, Huang et al.~\cite{Huang2022systemmetastability} performed an extensive empirical study, demonstrating that metastable failures represent a prevalent cause of service outages across major software organizations.
Their work further refines the informal conceptualizations proposed by Bronson et al.~\cite{bronson2021metastable} and empirically reproduces these failures to better understand their dynamics.
Despite these advances, practical analysis and mitigation of metastable failures remain challenging due to the complex feedback loops that underlie their behavior and the difficulty of reproducing such failures in controlled environments. 
This paper addresses these challenges by providing a framework for detecting and characterizing metastable behavior. Habibi et al.~\cite{habibi2023msfmodel} modeled metastable server systems using continuous-time Markov chains (CTMCs) and employed Monte Carlo simulations to analyze their metastability.
Alvaro et al.~\cite{alvaro2025formalanalysismetastablefailures} observed that \emph{ab initio} CTMCs fail to capture the true system dynamics and proposed a calibration method that leverages data trajectories to adjust the CTMC transition rates. They then used the spectral properties of the calibrated CTMC to reason about metastability in simple FIFO systems. Isaacs et al.~\cite{analyzing-metastable-faiures-hotos-2025} propose using an integrated ensemble of CTMCs, discrete-event simulators, and service emulations to perform CTMC calibration more efficiently. However, their approach remains limited to systems that can be modeled as CTMCs.
In this paper, we propose a more general model and demonstrate experimentally that it can capture richer dynamics, such as those arising in multi-server systems and implementations with controlled throttling mechanisms. Our choice of model---Koopman autoencoders---enables formal metastability analysis of complex distributed server systems.

\noindent\textbf{Koopman theory.} The Koopman theory was introduced by Koopman in his seminal paper~\cite{Koopman1931PNAS}. In recent years, numerous studies~\cite{Brunton2021ModernKoopman} have explored the integration of Koopman theory with machine learning methodologies across a wide range of applications.One of the earliest contributions employing Koopman-based autoencoders to learn embeddings for linear dynamical systems is presented in~\cite{Lusch18}.
The work by Meanti et al.~\cite{Meanti2023koopmanestimation} extends this line of research by
estimating Koopman operators with sketching to provably learn large scale dynamical systems.
Furthermore, an adaptive approach for learning kernel functions to approximate the spectral properties of the Koopman operator is proposed in~\cite{Takeishi2019spectralKoopman}.
While these studies offer valuable insights into different strategies for learning within the Koopman framework, they have not been investigated in the context of analyzing metastability, which is part of this work.


\section{Preliminaries}\label{sec:prelims}

\textbf{Notation.} We denote the set of natural, real, and non-negative real numbers
by $\nats$, $\reals$, and $\reals_{\ge 0}$, respectively. Further, let $\nats_0=\nats \cup \{0\}$.
We
use superscript $n > 0$ with $\reals$ to denote the Cartesian
product of $n$ copies of $\reals$. For a vector $a\in \reals^n$, the notation $\|a\|_\infty$ is used for the infinity norm of $a$.
For $n\in \nats$, $\mathbf{1}_n$ denotes an $n$-dimensional vector whose entries are all one. Given $c\in \reals^n$ and $\delta \in \reals_{\ge 0}$, we define $\ball_\delta(c)\coloneqq \set{x\in \reals^n\mid \| x-c\|_\infty \leq \delta}$. We denote $i^{th}$ entry of $x\in\reals^n$ by $x[i]$. 
For a matrix $A\in \reals^{n\times n}$, we
denote the spectral radius of $A$ by $\rho(A)$, which is the largest modulus of the
eigenvalues of $A$. 
A matrix $A \in \mathbb{R}^{n \times n}$ is said to be \emph{diagonalizable} if there exists an invertible matrix $S$ and a diagonal matrix $\Gamma$ such that $A = S \Gamma S^{-1}$. The \emph{condition number} of $S$, defined (with respect to the $\ell_\infty$-norm) as
\[
\kappa_\infty \;=\; \|S\|_\infty \, \|S^{-1}\|_\infty.
\]
This quantity controls how the norm of $A^t$ can amplify relative to its spectral radius, since
\[
\|A^t\|_\infty \le \kappa_\infty \, \rho(A)^t.
\]

Let $\mu$ and $\nu$ be two probability measures on a measurable space 
$(S, \mathcal{B}(S))$. The \emph{total variation norm} (or \emph{total variation distance}) between 
$\mu$ and $\nu$ is defined as $\|\mu - \nu\|_{\mathrm{TV}} 
= \sup_{A \in \mathcal{B}(S)}|\mu(A) - \nu(A)|$. This represents the largest possible difference in probability that $\mu$ and $\nu$ assign 
to the same measurable event $A$. 


Let $A \subset \reals^n$ be a compact and connected set. $\ell_\infty$ radius of $A$ is an $n$-dimensional vector whose $i^{th}$ entry is defines as $\sup_{x,x'\in A}\|x[i]-x'[i]\|_\infty/2$. Let $A \subset \reals^n$ be a compact and connected set with $\ell_\infty$-radius greater than $\varepsilon > 0$. Then $A^{-\varepsilon}$ and $A^{+\varepsilon}$ denote the $\varepsilon$-shrunk and $\varepsilon$-expanded sets of $A$, respectively. Formally, $A^{-\varepsilon}=\set{x\mid \ball_\varepsilon(x)\subseteq A}$, and $A^{+\varepsilon}=\set{x\mid \ball_\varepsilon(x)\cap A \neq \emptyset}$. For two sets $A,B\subset \reals^n$, we define $\operatorname{dist}(A,B)=\inf_{(x,x')\in A\times B} \|x-x'\|_\infty$.

For a mapping $f\colon S \to \hat S$, with $S \subset \reals^n$ and $\hat S \subset \reals^m$, and a set $A \subseteq S$, we define $f(A)=\set{\xh\in \Sh\mid \exists x \in A,\; \xh=f(x)}$. 
Furthermore, we define
\begin{align}\label{eq:Lipschitz_const_def}
     U^f&\coloneqq \sup_{x,x'\in S} \frac{\|f(x')-f(x)\|_{\infty}}{\|x-x'\|_\infty}, \text{ and}\nonumber\\
     C^f&\coloneqq \inf_{x,x'\in S} \frac{\|f(x')-f(x)\|_{\infty}} {\|x-x'\|_\infty}.
     \end{align}

\subsection{Discrete-time Markov Processes}
\begin{definition}[Discrete-time Markov process]
Let $S \subseteq \mathbb{R}^n$ be a set, and let $\mathcal{B}(S)$ denote the Borel $\sigma$–algebra on $S$, i.e., the smallest $\sigma$–algebra containing all open subsets of $S$. 
A stochastic process $(X_t)_{t\in \nats_0}$ taking values in $S$ is called a 
\emph{discrete-time Markov process} 
if, for all $t \in \nats_0$ and all measurable sets $A\subseteq S$,
\[
\mathbb P\!\big(X_{t+1}\in A \,\big|\, X_0, X_1, \dots, X_t\big)
\;=\;
\mathbb P\!\big(X_{t+1}\in A \,\big|\, X_t\big)
\;=\;
P(X_t, A),
\]
where $P:S\times\mathcal B(S)\to[0,1]$ denotes the 
\emph{transition kernel} of the process.

\end{definition}


\begin{definition}[Stationary Distribution]
A probability measure $\piinf$ on $(S, \mathcal{B}(S))$ is called \emph{stationary} (or \emph{invariant}) if
\[
\piinf(A) = \int_{S} P(x, A)\, \piinf(dx), 
\quad \forall\, A \in \mathcal{B}(S).
\]
\end{definition}

\begin{definition}[Ergodicity]
The Markov process is said to be \emph{ergodic} if there exists a unique stationary probability measure $\piinf$, and for every initial state $x \in S$,
  \[
  \lim_{t\rightarrow \infty}\| P^t(x, \cdot) - \piinf(\cdot) \|_{\mathrm{TV}} = 0,
  \]
  where $P^t(x, \cdot)$ denotes the distribution of $X_t$ given $X_0=x$.
Ergodicity implies that $X_t$ converges to $\piinf$ in probability as $t \to \infty$, independently of the initial state $x$.
\end{definition}

\subsection{Dynamical Systems}

\begin{definition}[Deterministic dynamical systems]
\label{def:dynamical_sys}
A \emph{deterministic dynamical system} $\Sigma$ is defined as a tuple 
$\Sigma = (S, f)$, where $S \subseteq \mathbb{R}^n$ denotes the \emph{state space}, and
$f \colon S \to S$ specifies the \emph{dynamics}. 
The system evolves according to
\[
x_{t+1} = f(x_t), \quad x_0 \in S.
\]
If $f(x) = Ax$ with $A \in \reals^{n \times n}$, the system is said to be \emph{linear}. Such systems are denoted by $\Sigma = (S, A)$.
\end{definition}

For a linear dynamical system with state matrix $A$, we order the eigenvalues of $A$ in descending order of their modulus, such that $\lambda_1$ has the largest modulus and $\lambda_n$ the smallest. 

A discrete-time linear dynamical system $\Sigma=(S,A)$ is \emph{stable} if $\rho(A)<1$. For a bounded and connected state space $S\subset \reals^n$, we define the diameter of $S$ as follows:
\begin{equation}\label{eq:absolute_diameter}
    D(S) \coloneqq \sup_{x,x' \in S} \|x-x'\|_\infty.
\end{equation}

In the following, we introduce two notions for formalizing time scales in dynamical systems which will be used throughout of the paper.

\begin{definition}[$\varepsilon$-hitting time]\label{def:eps_hitting_time}
 For a given dynamical system $\Sigma=(S,f)$, constant $\varepsilon>0$ and sets $C,D\subset S$ with $\ell_\infty$ radius greater than $\varepsilon$ 
 and $\operatorname{dist}(C,D)>\varepsilon$, 
 $\tau_\varepsilon^\Sigma(C,D)$ denotes the infimum over the first time to reach $D^{+\varepsilon}$ when starting from $x_0\in C^{-\varepsilon}$.
\end{definition}

\begin{definition}[$\delta$-settling time]\label{def:delta_settling_time}
    For a dynamical system $\Sigma=(S,f)$ and a real valued constant $\delta>0$, let $x_0\in S$ and $\xinf\coloneqq \lim
    _{t\rightarrow \infty}x_t$ denote initial state and the corresponding steady state value. We define $\delta$-settling time as    \begin{equation}\label{eq:delta_settling_time_def}
    T_\delta^\Sigma \coloneqq \sup_{x_0\in S}\min \set{t\in \nats\mid \|x_t - \xinf \|_\infty<\delta}.
    \end{equation}

\end{definition}

\begin{remark}
For a stable linear system $\Sigma=(S,A)$ with $\rho(A)<1$, we have $x_\infty=0$ and
    \begin{equation}\label{eq:delta_settling_time_def_linear}
    T_\delta^\Sigma \coloneqq \sup_{x_0\in S}\min \set{t\in \nats\mid \|A^t x_0 \|_\infty<\delta}.
    \end{equation}
\end{remark}

\subsection{Koopman Autoencoders}

A \emph{Koopman autoencoder} model $\M =(\encoder, \decoder, A)$ consists of the following components:
\begin{enumerate}
    \item \textbf{Encoder} $\encoder$ is a mapping $\encoder: \reals^n \to \reals^m$ that maps the system's initial state 
    $x_0 \in \reals^n$ into a latent representation $\hat{x}_0 = \encoder(x_0)$.
    \item \textbf{Linear Koopman operator} $A \in \reals^{m \times m}$ models the linear evolution 
    of the latent dynamics via $\hat x_{t+1} = A \hat x_t$ for every $t\in \nats_0$.
    \item \textbf{Decoder} $\decoder$ is a mapping $\decoder: \reals^m \to \reals^n$ that reconstructs the original 
    state from its latent representation, i.e., $x_t' = \decoder(\hat{x_t})$ for every $t\in \nats_0$.
\end{enumerate}
\begin{assumption}
 Throughout the paper we assume that the Koopman operator $A$ is diagonalizable. 
\end{assumption}

\subsection{M/M/1 with Retries And Timeouts}

An M/M/1 queue models a simple client-server system with a single queue and a single server.
Clients send requests according to an exponential distribution with rate $\nu$.
Requests are enqueued at the tail and processed in a First In, First Out (FIFO) order.
Each request is served at an exponential rate $\mu$. The service times are independent from each other and from the arrival process. 


The M/M/1 model can be extended to incorporate timeouts and retry mechanisms.
\emph{Timeout} means that there is a constant $T_{timeout}$ such that, if a request has not been served within $T_{timeout}$, a client can take further action. 
This action can take one of two forms: a \emph{retry}, where a new instance of the request is added to the queue without removing the original request, or a \emph{drop}, where the client abandons the request entirely. Incorporating these behaviors allows the model to capture the dynamics underlying phenomena such as retry storms and metastable failures in queuing systems.

M/M/1 queues with retries and timeouts can be modeled as continuous-time Markov chains (CTMCs) defined over discrete state spaces~\cite{habibi2023msfmodel,alvaro2025formalanalysismetastablefailures}.
A fundamental limitation of CTMC-based approaches, however, is the requirement of a discrete state space whose size grows exponentially with the number of servers, resulting in very large CTMCs. This renders metastability analysis computationally intractable in practice, even when black-box linear algebra methods are utilized to exploit the existing sparsity. Moreover, extending CTMCs to systems with additional features---such as multiple servers, throttling mechanisms, or substituting FIFO queues with LIFO stacks---is nontrivial and, when possible at all, leads to further combinatorial explosion of the state space. 

Although we ultimately analyze discrete-time trajectories, these trajectories are obtained from a discrete-event simulator that operates in continuous time. Our goal is not to model the exact queue-length dynamics at the state-transition level, but rather to capture the aggregate or average behavior of the system.
Therefore, in this paper, we focus on discrete-time Markov processes with continuous state spaces, which can efficiently capture the dynamics of a broader class of real-world queuing systems.

\section{Metastability Characterization}\label{sec:metastability}
We consider stochastic systems exhibiting metastability due to rare events and show that, for ergodic systems, metastability admits a deterministic formulation in terms of the moment dynamics. We thus start with \emph{ergodic} stochastic systems. 

\begin{lemma}[Expectation convergence for ergodic Markov processes]\label{lem:expectation_convergence}
Let $(X_t)_{t\in \nats_0}$ be a discrete-time Markov process on a measurable state space
$(S,\mathcal B(S))$ with $S \subset \mathbb R^n$ \emph{bounded}.
Assume the Markov process is ergodic. Then, 
\[
\mathbb E_x\big[X_t\big]
\;=\;\int_{S}y\,P^t(x,dy)
\;\xrightarrow[t\to\infty]{}\;
\int_{S}y\,\piinf(dy),
\]
that is expectation of $X_t$ converges to a unique stationary (steady-state) value.
\end{lemma}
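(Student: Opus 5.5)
The plan is to reduce the convergence of expectations to total variation convergence, which ergodicity provides, using boundedness of $S$ to control the integrand. Since $S$ is bounded, there is $M<\infty$ with $\|y\|_\infty\le M$ for all $y\in S$. Hence each coordinate function $g_i(y)=y[i]$ is bounded and measurable on $S$, so both $\int_S y\,P^t(x,dy)$ and $\int_S y\,\piinf(dy)$ are well defined and finite. It therefore suffices to prove convergence coordinatewise, i.e.\ for each $i\in\{1,\dots,n\}$.

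The key step is the standard bound relating integrals of bounded functions to the total variation distance. For probability measures $\mu,\nu$ on $(S,\mathcal B(S))$ and measurable $g$ with $\sup_S|g|\le M$, I will show
\[
\Big|\int_S g\,d\mu-\int_S g\,d\nu\Big|\le 2M\,\|\mu-\nu\|_{\mathrm{TV}}.
\]
To obtain this, let $\mu-\nu=\sigma$ be the signed measure and take its Hahn decomposition $S=P\cup N$. Then $\int g\,d\sigma\le M(\sigma(P)-\sigma(N))=M|\sigma|(S)$. Because $\sigma(S)=0$, we have $\sigma(P)=-\sigma(N)=\sup_A|\sigma(A)|=\|\mu-\nu\|_{\mathrm{TV}}$, which gives the factor $2M$.

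Applying this bound with $\mu=P^t(x,\cdot)$, $\nu=\piinf$ and $g=g_i$ yields
\[
\big|\mathbb E_x[X_t][i]-\textstyle\int_S y[i]\,\piinf(dy)\big|\le 2M\|P^t(x,\cdot)-\piinf\|_{\mathrm{TV}}.
\]
By the ergodicity definition, the right-hand side tends to $0$ as $t\to\infty$ for every initial state $x$. Taking the maximum over $i$ then gives convergence in $\|\cdot\|_\infty$.

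Uniqueness of the limit comes from ergodicity, which asserts that $\piinf$ is the unique stationary measure, together with uniqueness of limits. The limit value is independent of $x$. The only mildly delicate point is the total variation inequality, in particular keeping track of the factor $2$ under the paper's sup-over-sets convention. This is routine, and no compactness or uniform-in-$x$ argument is needed.
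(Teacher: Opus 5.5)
Your proposal is correct and follows the same route as the paper: boundedness of $S$ makes the coordinate maps bounded measurable test functions, and the total variation convergence given by ergodicity then forces the expectations to converge. Your coordinatewise treatment and your constant $2M$ are more careful than the paper's, whose bound uses $\|g\|_\infty$ and is off by a factor of $2$ under its own sup-over-sets definition of $\|\cdot\|_{\mathrm{TV}}$; the constant does not affect the limit.
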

\begin{proof}
Since $S$ is bounded,
$g(y)=y$ is bounded on $S$:
there exists $M<\infty$ with $\sup_{y\in S}g(y) \le M$.
Hence $g$ is a bounded measurable function.

Total variation convergence implies convergence of expectations for
bounded measurable test functions: for any bounded measurable $g$,
\[
\left|\int g\,dP^t(x,\cdot)-\int g\,d\pi\right|
\;\le\; \|g\|_\infty\,\|P^t(x,\cdot)-\pi\|_{\mathrm{TV}}.
\]
Applying this we obtain
\[
\left|\mathbb E_x\big[X_t\big]-\int y\,\pi(dy)\right|
\;\le\; M\,\|P^t(x,\cdot)-\pi\|_{\mathrm{TV}}
\;\xrightarrow[t\to\infty]{}\;0.
\]
This proves the claim.
\end{proof}

\begin{remark}
The arguments used in the proof of the previous lemma extend directly to show that all higher-order moments of $X_t$ converge to unique steady-state values.
\end{remark}

Based on the lemma above, ergodicity guarantees that $\mathbb{E}(X_t)$ converges to a unique steady state value $\xinf\in S$. Hereafter, we consider $\Sigma=(S,f)$ as the dynamics that governs evolution of $\mathbb{E}(X_t)$ and use \begin{equation}\label{eq:average_state_notation}
x_t\coloneqq \mathbb{E}(X_t).
\end{equation}
We further denote the $\delta$-neighborhood of the attractor as 
\begin{equation}\label{eq:D_inf}
\Dinf\coloneqq \ball_\delta(\xinf).
\end{equation}



For stochastic systems, metastability is often characterized using either \emph{expected hitting time} that is required to take the system from one \emph{metastable set} to another, or using the \emph{mixing time}, that is the maximum time required for a process, starting from an arbitrary initial distribution, to reach the neighborhood of its stationary distribution. Systems with large mixing times are regarded as metastable \cite{alvaro2025formalanalysismetastablefailures}. In the following we provide an analogous characterization of metastability that is applicable to the average dynamics $\Sigma$. 
The characterization stated below uses the concept of $\varepsilon$-hitting time (Def.~\ref{def:eps_hitting_time}).


\begin{definition}[$(T, \varepsilon, \Dinf)$-metastability for ergodic Markov processes]\label{def:metastability_def_main}
 For an ergodic Markov process $(X_t)_{t\in \nats_0}$ defined over a measurable state space $
 (S, \mathcal B(S))$, let $\Sigma=(S,f)$ denote the dynamical system that describes the evolution of $x_t=\mathbb E(X_t)$.
Given 
\emph{candidate sets} $D_1, D_2, \dots, D_K\subset S$ ($K\geq 1$), attractor set $\Dinf\subset S$, time horizon $T \in \nats$, and constants $0<\varepsilon< \delta$, we say that $\Sigma$ is $(T, \varepsilon, D_\infty)$-metastable if the candidate sets are $\varepsilon$-separate from each other and
\begin{align}
    & \min_{1\leq i\leq K}\tau_\varepsilon^\Sigma(D_i, D_\infty)>T.
\end{align}
\end{definition}	
    \section{Metastability Analysis via Koopman Autoencoders}\label{sec:autoencoder_learning}

In Sec.~\ref{sec:metastability}, we provided formal characterizations for metastability in a deterministic dynamical system $\Sigma=(S,f)$ that corresponds to the \emph{average dynamics} of the underlying ergodic Markov process. The stated characterization relies on computation of $\tau_\varepsilon^\Sigma$ that in practice for general nonlinear $f$ may be computationally intractable. In this section, we (1) propose a modeling framework using Koopman autoencoders of the form $(\encoder,\decoder, A)$ for which we relate the metastability of $\Sigma$ and $\hat\Sigma=(\encoder(S), A)$, and (2) we provide quantitative relation between spectral radius of $\hat\Sigma$ and time scales of $\Sigma$.

\subsection{Learning from Data Trajectories}
In this section, we present a method for \emph{learning} a model that takes the form of Koopman autoencoders, using a finite set of trajectories generated either by the real system or a high-fidelity simulator. 
Our learning objective is to fit parameters of the Koopman autoencoder model with respect to a number of \emph{average trajectories} of the system.

To collect trajectories, we choose a set of initial states $\set{\bar x_0^{(1)}, \bar x_0^{(2)},\dots,\bar x_0^{(Z)}}$. 
For each $1\leq i \leq Z$, we run the simulator $M$ times, to produce $M$ simulated trajectories, each of length $L\in \nats$ and sampled regularly with respect to a chosen sampling time $\tick>0$. Note that $Z$, $M$, $L$, and $\tick$ are hyperparameters for the data collection.

For every $1\leq i \leq Z$ and $1\leq j \leq M$, we
write $\bar X^{i,j}_{t}$ for the simulator state at time steps $0\leq t \leq L-1$.
Note that $\bar X^{i,j}_{0}=\bar x_0^{(i)}$ for every $1\leq j \leq M$. 

Next, for every $1\leq i \leq Z$ and $0\leq t \leq L-1$, we  
compute the \emph{empirical} average 
\[
\bar x_t^i\coloneqq \frac{1}{M}\sum_{j=1}^M \bar X^{i,j}_t
\]
This gives the averaged dynamics of the simulator over $M$ runs.

We would like to ``match'' this average simulator dynamics to the output of the Koopman auto encoder model at corresponding times.
The corresponding auto encoder outputs are computed as 
\[
y_t^i(\theta)\coloneqq \decoder(\theta)(A^t(\theta)\encoder(\theta)(\bar x_0^i)),
\]
where $\theta$ denotes the set of parameters of the $\M(\theta)$.

Our aim is to find $\theta^\ast$ such that the average output trajectories of $\M(\theta^\ast)$ , i.e., $y^i_t({\theta^\ast})|_{t=0}^{L-1}$, match as closely as possible with the trajectories of the simulator, i.e., $\bar x_t^i|_{t=0}^{L-1}$, for every $1\leq i \leq Z$.

Formally, we solve the following optimization problem that minimizes the loss:
\begin{align}\label{eq:optimization_learning}
	&\min_{\theta}\sum_{i=1}^Z\sum_{t=0}^{L-1} \|y_t^i(\theta)- \bar x_t^i\|_2^2.
\end{align}

\subsection{Relating Time Scales}
We first consider the (unrealistic) case wherein the model introduces no error compared to the actual dynamics. 


\begin{definition}[Perfect representation] For a discrete-time deterministic dynamical system $\Sigma=(S,f)$ with $S\subset \reals^n$, let $\M=(\encoder,\decoder,A)$ denote a Koopman autoencoder model, 
such that for every $t\in \nats_0$ and $x_0 \in S$ we have $f^t(x_0)=\decoder(A^t \encoder(x_0))$. We then call $\M$ a perfect representation of $\Sigma$. 
\end{definition}


 The next lemma addresses the following question: 
under the perfect representation, can the metastability of the average dynamics 
$\Sigma$ be inferred from the metastability properties of the linear operator 
$A$ that constitutes part of the model $\M$?

\begin{lemma}[Time scale changes under perfect representation]\label{lem:perfect_representation}
Let $I,F\subset S$ be compact and connected sets, and let $\M=(\encoder,\decoder, A)$ be a perfect representation of $\Sigma=(S,f)$. 
Further, let $\hat I = \encoder(I)$ and $\hat F = \encoder(F)$. Then we have
\begin{equation}
\tau_0^\Sigma(I, F) = \tau_0^{\hat\Sigma}(\hat I, \hat F),
\end{equation}
where $\hat\Sigma=(A,\Sh)$ and $\Sh=\encoder(S)$.
\end{lemma}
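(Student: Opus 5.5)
The plan is to show that the encoder $\encoder$ conjugates $\Sigma$ to $\hat\Sigma$ on $S$. Concretely, I would prove $\encoder(f^t(x_0)) = A^t\encoder(x_0)$ for all $x_0\in S$ and $t\in\nats_0$, and that $\encoder\colon S\to\Sh$ is a bijection. Once this holds, the two hitting-time problems are the same problem written in different coordinates.

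First I unpack the definitions with $\varepsilon=0$. Since $I^{-0}=I$ and $F^{+0}=F$, we have
\[
\tau_0^\Sigma(I,F)=\inf_{x_0\in I}\min\{t\mid f^t(x_0)\in F\},\qquad
\tau_0^{\hat\Sigma}(\hat I,\hat F)=\inf_{\xh_0\in \hat I}\min\{t\mid A^t\xh_0\in \hat F\}.
\]
Taking $t=0$ in the perfect-representation identity gives $\decoder(\encoder(x))=x$ for all $x\in S$. Hence $\encoder$ is injective on $S$, so it is a bijection $S\to\Sh$, and its inverse is $\decoder|_{\Sh}$. In particular $\decoder$ is injective on $\Sh$.

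Next comes the conjugacy, which is the main obstacle. Perfect representation only controls $\decoder(A^t\encoder(x_0))$, so I need the latent trajectory to stay in $\Sh$. I would take this from the standing convention that $\hat\Sigma=(\Sh,A)$ is a dynamical system in the sense of Def.~\ref{def:dynamical_sys}, i.e.\ $A\Sh\subseteq\Sh$; I would state this explicitly as the interpretation being used. Then both $A^t\encoder(x_0)$ and $\encoder(f^t(x_0))$ lie in $\Sh$, and they satisfy
\[
\decoder(A^t\encoder(x_0)) = f^t(x_0) = \decoder(\encoder(f^t(x_0))).
\]
Injectivity of $\decoder$ on $\Sh$ then yields $A^t\encoder(x_0)=\encoder(f^t(x_0))$.

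Finally I conclude by bijectivity. For $x_0\in S$ we have $f^t(x_0)\in F$ if and only if $\encoder(f^t(x_0))\in\encoder(F)=\hat F$; the reverse implication uses injectivity of $\encoder$ on $S$. This is in turn equivalent to $A^t\encoder(x_0)\in\hat F$. So the first hitting time of $F$ from $x_0$ equals the first hitting time of $\hat F$ from $\encoder(x_0)$. Because $\encoder$ maps $I$ onto $\hat I$, taking the infimum over $x_0\in I$ is the same as taking it over $\xh_0\in\hat I$, which gives $\tau_0^\Sigma(I,F)=\tau_0^{\hat\Sigma}(\hat I,\hat F)$. The compactness and connectedness of $I$ and $F$ play no role beyond making the hitting-time notion well defined.
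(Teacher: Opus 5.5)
Your proof is correct and follows essentially the same route as the paper: both compare first hitting times pointwise through the identity $A^t\encoder(x_0)=\encoder(f^t(x_0))$ and then take the infimum over $I$ using $\encoder(I)=\hat I$. Where the paper simply asserts $\encoder=\decoder^{-1}$, you derive injectivity of $\encoder$ from the $t=0$ case and state explicitly the invariance $A\Sh\subseteq\Sh$ that the conjugacy step needs; this tightens the paper's argument rather than changing it.
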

\begin{proof}
Let $x_0, x_1,\ldots$ denote the sequence of states generated by applying $f$ consecutively. Let $N=\tau^\Sigma_0(x_0, F)$ for some $x_0 \in I$. For every $t<N$ we have $x_t \notin F$, and $x_N\in F$. This gives (1) $\encoder(x_0) \in \hat I$ (since $\hat I = \encoder(I)$), (2) for every $t<N$, $A^t\encoder(x_0)\notin \hat F$ (since $\encoder = \decoder^{-1}$, $\encoder(\decoder(A^t\encoder(x_0))) = A^t\encoder(x_0)$ and $\hat F = \encoder(F)$), and (3) $A^N\encoder(x_0)\in \hat F$ (since $\hat F =\encoder(F)$). Therefore, $\tau_0^\Sigma(x_0, F) = \tau_0^{\hat\Sigma}(\hat x_0, \hat F)$, where $\hat x_0 = \encoder(x_0)$. Since the arguments above hold for every $x_0 \in I$, we can conclude that $\tau_0^\Sigma(I, F) = \tau_0^{\hat\Sigma}(\hat I, \hat F)$.
\end{proof}
\begin{corollary}[Metastability under perfect representation]\label{col:perfect_representation}
Let $\Sigma=(S,f)$ be $(T, 0, \Dinf)$-metastable with respect to the (compact and connected) sets $D_1,\cdots,D_K$. Further, let $\M=(\encoder,\decoder, A)$ be a perfect representation for $\Sigma$, and $\hat D_i \coloneqq \encoder(D_i)$ for every $1\leq i \leq K$. Then, $\hat\Sigma=(\Sh,A)$ with $\Sh=\encoder(S)$ is $(T,0,\Dinf)$-metastable with respect to the sets $\hat D_1,\cdots,\hat D_K$.
\end{corollary}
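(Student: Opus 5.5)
The corollary follows by applying Lemma~\ref{lem:perfect_representation} once per candidate set and then checking the remaining clauses of Definition~\ref{def:metastability_def_main} in the latent space. With $\varepsilon=0$, the shrunk and expanded sets $D_i^{-0}$ and $\Dinf^{+0}$ coincide with $D_i$ and $\Dinf$. Hence $\tau_0^\Sigma(D_i,\Dinf)$ is simply the first time a trajectory started in $D_i$ enters $\Dinf$, which is exactly the quantity treated in Lemma~\ref{lem:perfect_representation}. For each $1\le i\le K$, I take $I=D_i$ and $F=\Dinf$. Here $\Dinf=\ball_\delta(\xinf)$ is compact and connected, and the $D_i$ are compact and connected by hypothesis. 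The lemma then gives $\tau_0^\Sigma(D_i,\Dinf)=\tau_0^{\hat\Sigma}(\hat D_i,\hat\Dinf)$, where $\hat\Dinf\coloneqq\encoder(\Dinf)$.

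Taking the minimum over $i$, the metastability hypothesis on $\Sigma$ yields
\[
\min_{1\le i\le K}\tau_0^{\hat\Sigma}(\hat D_i,\hat\Dinf)=\min_{1\le i\le K}\tau_0^{\Sigma}(D_i,\Dinf)>T.
\]
This is the hitting-time condition for $\hat\Sigma$, with attractor set $\encoder(\Dinf)$. I read the ``$\Dinf$'' in the corollary's conclusion as this encoded set, since the latent system lives in $\Sh$.

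The remaining condition is that the encoded candidate sets are $0$-separate, i.e.\ pairwise disjoint, since $\operatorname{dist}(\cdot,\cdot)>0$ is the separation notion used with $\varepsilon=0$. The proof of Lemma~\ref{lem:perfect_representation} already uses that $\encoder$ is injective on $S$ with $\decoder\circ\encoder=\mathrm{id}$, which follows from the perfect-representation identity at $t=0$. Injectivity gives $\encoder(D_i)\cap\encoder(D_j)=\encoder(D_i\cap D_j)=\emptyset$.

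The main obstacle is upgrading disjointness to a positive distance $\operatorname{dist}(\hat D_i,\hat D_j)>0$. I would use compactness. If $\encoder$ is continuous, as is implicitly assumed for a neural encoder, then each $\hat D_i$ is compact. Two disjoint compact sets have positive distance, since the distance between compact sets is attained. The same continuity argument also shows that $\hat D_i$ and $\hat\Dinf$ are compact and connected, which is what the latent-space hitting-time definition requires. So I would state continuity of $\encoder$ explicitly as the single extra assumption needed for the conclusion.
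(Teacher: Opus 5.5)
Your proposal is correct and matches the paper's proof, which likewise applies Lemma~\ref{lem:perfect_representation} with $I=D_i$ and $F=\Dinf$ for each $i$, and implicitly reads the latent attractor as $\encoder(\Dinf)$. The paper's one-line argument never checks that the encoded candidate sets stay $0$-separate; your injectivity-plus-compactness argument, with continuity of $\encoder$ stated as an explicit assumption, fills that omission cleanly.
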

\begin{proof}
Based on Lem.~\ref{lem:perfect_representation}, we have for every $1\leq i \leq K$ that $\tau_0^\Sigma (D_i, D_\infty)=\tau_0^{\hat\Sigma} (\hat D_i, \hat D_\infty)$. 
This proves the claim. 
\end{proof}

So far, we have discussed that, when the representation is perfect, the metastability properties of the nonlinear dynamical system carry over directly to those of the linear mapping in the latent space. However, in practice, learned representations may not be perfect. 


The following definition formalizes a useful performance measure for learning that we will use later.

\begin{definition}[$(S,T,\varepsilon)$-close learned Koopman autoencoder]
    Given an original dynamical system $\Sigma=(S,f)$ and a Koopman autoencoder model $\M=(\encoder,\decoder,A)$, for every $t\in \nats_0$ let $x_t$ and $\hat x_t$, respectively, denote the state vectors of $\Sigma$ and $\hat\Sigma=(\encoder(S),A)$, where $\hat x_0=\encoder(x_0)$. Further let $x_t'=\decoder(\hat x_t)$ for $t\in \nats_0$. 
    For a time horizon $T\in \nats_0$, we say $\M$ is an $(S,T,\varepsilon)$-close learned linear dynamics for $\Sigma$ if 
    \begin{equation}
        \forall x_0 \in S, \forall 0\leq t \leq T\colon \|x_t-x_t'\|\leq \varepsilon.
    \end{equation}
\end{definition}

In the following lemma, we show how $(S,T,\varepsilon)$-closeness affects the time scales.

\begin{lemma}[Time scales changes under  $(S,T,\varepsilon)$-close learned Koopman autoencoders]\label{lem:imperfect_time_scale_transfer}

Let $I,F\subset S$ be compact and connected sets, and let $\M=(\encoder,\decoder, A)$ be a $(S,T,\varepsilon)$-close learned representation of $\Sigma=(S,f)$. Further, let $\hat I = \encoder(I)$ and $\hat F = \encoder(F)$. If $\tau_{0}^\Sigma (I, F) \leq T$, we have  
\begin{equation}
\label{eq:timescale-reln}
    \tau_{\varepsilon}^\Sigma(I, F) \leq \tau_{0}^{\sigmahat} (\hat{I},\hat{F}).    
\end{equation}
\end{lemma}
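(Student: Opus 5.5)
The plan is to compare the latent trajectory with the true trajectory pointwise, using $(S,T,\varepsilon)$-closeness, and then translate a latent hitting time into an $\varepsilon$-hitting time for $\Sigma$.

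\textbf{Step 1: reduce to a single initial state.} Fix $x_0\in I^{-\varepsilon}\subseteq I$ and set $\hat x_0=\encoder(x_0)\in\hat I$. Let $N=\tau_0^{\sigmahat}(\hat x_0,\hat F)$ be the first time with $\hat x_N=A^N\hat x_0\in\hat F$. If I show $x_N\in F^{+\varepsilon}$, then the first time $\Sigma$ reaches $F^{+\varepsilon}$ from $x_0$ is at most $N$. Taking the infimum in Def.~\ref{def:eps_hitting_time} over $I^{-\varepsilon}\subseteq I$ then gives \eqref{eq:timescale-reln}.

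\textbf{Step 2: show $x_N\in F^{+\varepsilon}$.} Since $\hat x_N\in\hat F=\encoder(F)$, there is some $z\in F$ with $\encoder(z)=\hat x_N$. I want to conclude that the decoded point $x_N'=\decoder(\hat x_N)$ lies in $F$. For this I will use $\decoder\circ\encoder=\mathrm{id}$ on $S$, so that $x_N'=\decoder(\encoder(z))=z\in F$; this is the same left-inverse property invoked in the proof of Lem.~\ref{lem:perfect_representation}. If $N\le T$, closeness gives $\|x_N-x_N'\|_\infty\le\varepsilon$, so $\ball_\varepsilon(x_N)$ meets $F$, i.e.\ $x_N\in F^{+\varepsilon}$.

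\textbf{Step 3: justify $N\le T$.} This is the main obstacle, because closeness only holds on the horizon $[0,T]$. The hypothesis $\tau_0^\Sigma(I,F)\le T$ is meant to supply it, and I see two routes.
\begin{itemize}
\item \emph{Reading of the hypothesis.} If $\tau$ over a set is read as a worst case over initial states, then it bounds the true hitting time from every $x_0$. I then need the latent hit to occur no later than this.
\item \emph{Case split.} If $N>T$, then $\tau_0^{\sigmahat}(\hat I,\hat F)>T\ge\tau_0^\Sigma(I,F)\ge\tau_\varepsilon^\Sigma(I,F)$. The last inequality holds because shrinking $I$ to $I^{-\varepsilon}$ and enlarging $F$ to $F^{+\varepsilon}$ can only shorten hitting times, possibly after restricting to common initial states. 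So the claim holds trivially in this case.
\end{itemize}
I will use the case split, which avoids needing the latent trajectory to be controlled beyond $T$. The care needed is to make the infimum/supremum conventions for set-to-set hitting times consistent between the two cases.
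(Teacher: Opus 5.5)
Your Steps 1--2 are the paper's argument. Closeness on $[0,T]$ plus $\decoder\circ\encoder=\mathrm{id}$ turns a latent visit to $\hat F$ into a visit of the true trajectory to $F^{+\varepsilon}$. Your case split in Step 3 handles the horizon issue that the paper's proof silently skips. However, the convention question you defer at the end is a genuine gap, not bookkeeping.

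Def.~\ref{def:eps_hitting_time} takes the \emph{infimum} over $x_0\in C^{-\varepsilon}$. Write $N(x_0)$ for the latent hitting time of $\hat F$ from $\encoder(x_0)$. Step 1 then only yields $\tau_\varepsilon^\Sigma(I,F)\le\inf_{x_0\in I^{-\varepsilon}}N(x_0)$. The target is $\inf_{x_0\in I}N(x_0)$, which can be strictly smaller, so restricting to $I^{-\varepsilon}$ moves the infimum the wrong way. The same problem breaks two links of your case split. First, $N(x_0)>T$ for one $x_0$ does not give $\tau_0^{\sigmahat}(\hat I,\hat F)>T$. Second, $\tau_\varepsilon^\Sigma(I,F)\le\tau_0^\Sigma(I,F)$ fails, and your hedge ``possibly after restricting to common initial states'' is exactly where it breaks.

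No bookkeeping repairs this, because under the infimum reading the statement is false. Take $S=[0,10]^2$, $f(x)=x/2$, and identity encoder and decoder with $A=\operatorname{diag}(1/2,1/2)$. This is a perfect representation, hence $(S,T,\varepsilon)$-close for every $T$ and $\varepsilon$. Let $F=[0,1]^2$, $\varepsilon=0.1$, and let $I$ be $[8,10]^2$ joined to the segment from $(2,2)$ to $(8,8)$. The tip $(2,2)$ lands in $F$ after one step, so $\tau_0^\Sigma(I,F)=\tau_0^{\sigmahat}(\hat I,\hat F)=1\le T$. But $I^{-\varepsilon}=[8.1,9.9]^2$, and from there reaching $F^{+\varepsilon}$ takes three steps, so $\tau_\varepsilon^\Sigma(I,F)=3$. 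The paper's proof compares trajectories from a common $x_0$ exactly as you do and has the same defect.

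The repair is to commit to the worst-case reading, $\tau_\varepsilon^\Sigma(C,D)=\sup_{x_0\in C^{-\varepsilon}}\min\{t\mid x_t\in D^{+\varepsilon}\}$. This is the reading under which the hypothesis $\tau_0^\Sigma(I,F)\le T$ bounds every trajectory from $I$, which is what your first bullet in Step 3 wants. Then run your case split per initial state rather than on set-level quantities. Fix $x_0\in I^{-\varepsilon}$.
\begin{itemize}
\item If $N(x_0)\le T$, your Step 2 gives a hit of $F^{+\varepsilon}$ by time $N(x_0)$.
\item If $N(x_0)>T$ (including $N(x_0)=\infty$), the first hit of $F^{+\varepsilon}$ comes no later than the first hit of $F$. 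Since $x_0\in I$, that is at most $\tau_0^\Sigma(I,F)\le T<N(x_0)$.
\end{itemize}
In both cases the true $\varepsilon$-hitting time from $x_0$ is at most $N(x_0)\le\sup_{x\in I}N(x)=\tau_0^{\sigmahat}(\hat I,\hat F)$. Taking the supremum over $I^{-\varepsilon}$ concludes.

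One further point: unlike in Lem.~\ref{lem:perfect_representation}, $\decoder\circ\encoder=\mathrm{id}$ does not follow from the hypotheses here. Closeness at $t=0$ only gives $\|\decoder(\encoder(z))-z\|_\infty\le\varepsilon$, which would place $x_N$ in $F^{+2\varepsilon}$ rather than $F^{+\varepsilon}$. So the identity must be stated as an extra assumption, one the paper also makes tacitly.
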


\begin{proof}

  For $x_0 \in I$, the evolution of $x_0$ after $t$ steps in the original space and latent space is given by  $x_t = f^{t}(x_0)$ and $x_t' = \decoder(A^t \encoder(x_0)) $, respectively.
  Using $(S,T,\varepsilon)$-closeness, we have $\|x_t-x_t'\|\leq \varepsilon$ for every $1\leq t \leq T$. Therefore, we have (1) $x_t\in I^{-\varepsilon}$ implies that $\hat x_t\in \hat I$ and (2) $x_t\notin F^{+\varepsilon}$ implies that $\hat x_t\notin \hat F$. Note that $x_t\in F^{+\varepsilon}$ does \emph{not} imply that $\hat x_t\in \hat F$. Therefore, we have $\tau_{\varepsilon}^\Sigma(I, F) \leq \tau_{0}^{\sigmahat} (\hat{I},\hat{F})$.  
\end{proof}
Based on the above lemma, under the $(S,T,\varepsilon)$-closeness, large time scales in the latent space implies large time scales in the original state space with tolerance $\varepsilon$.

\begin{corollary}[Metastability under imperfect representation]\label{cor:metastability_transfer_imperfect}
Let $\M=(\encoder,\decoder, A)$ denote a Koopman autoencoder model that is $(S,T,\varepsilon)$-close with respect to the dynamical system $\Sigma=(S,f)$. Further, let $D_1, D_2,\dots, D_K$ denote the candidate sets in $S$ and $\hat D_1, \hat D_2, \dots, \hat D_K$ are defined as $\hat D_i = \encoder(D_i)$ for every $1\leq i \leq K$. Finally, let $D_\infty = \ball_\delta(x_\infty)$ and $\hat D_\infty = \encoder(D_\infty)$. We have that $(T,\varepsilon,\Dinf)$-metastability of $\Sigma$ over the original space  with respect to the candidate sets $D_1, D_2,\dots, D_K$ implies $(T, 0, \hat D_\infty)$-metastability of $\sigmahat =(\Sh, A)$ over the latent space $\Sh=\encoder(S)$ with respect to the candidate sets $\hat D_1, \hat D_2,\dots, \hat D_K$.

\end{corollary}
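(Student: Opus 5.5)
The plan is to obtain the corollary from Lem.~\ref{lem:imperfect_time_scale_transfer}, applied once per candidate set, in the same way Cor.~\ref{col:perfect_representation} was obtained from Lem.~\ref{lem:perfect_representation}. Fix $1\leq i\leq K$ and apply the lemma with $I=D_i$ and $F=D_\infty$. Both sets are compact and connected: $D_\infty=\ball_\delta(\xinf)$ is a closed $\ell_\infty$ ball, and the $D_i$ are taken compact and connected as in the corollary's setting. We have $\hat I=\encoder(D_i)=\hat D_i$ and $\hat F=\encoder(D_\infty)=\hat D_\infty$, so the lemma gives
\begin{equation*}
\tau_\varepsilon^\Sigma(D_i,D_\infty)\;\leq\;\tau_0^{\sigmahat}(\hat D_i,\hat D_\infty).
\end{equation*}
By $(T,\varepsilon,\Dinf)$-metastability of $\Sigma$, the left side exceeds $T$. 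Hence $\tau_0^{\sigmahat}(\hat D_i,\hat D_\infty)>T$ for each $i$. Taking the minimum over $i$ yields the hitting-time condition of Def.~\ref{def:metastability_def_main} for $\sigmahat$ with tolerance $0$ and attractor $\hat D_\infty$.

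The main obstacle is the hypothesis $\tau_0^\Sigma(D_i,D_\infty)\leq T$ required by Lem.~\ref{lem:imperfect_time_scale_transfer}. It need not hold here, since metastability asserts long hitting times. I would split into two cases.

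In the first case, $\tau_0^\Sigma(D_i,D_\infty)\leq T$, and the lemma applies directly.

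In the second case, the hitting time exceeds $T$. Here I would argue directly from $(S,T,\varepsilon)$-closeness, reusing the core of the lemma's proof, which only uses closeness on $[0,T]$. Take any $x_0\in D_i^{-\varepsilon}$. For each $t\leq T$, metastability gives $x_t\notin D_\infty^{+\varepsilon}$. Together with $\|x_t-x_t'\|\leq\varepsilon$, this means $x_t'$ is outside $D_\infty$, so $\hat x_t\notin\hat D_\infty$. Therefore the latent trajectory does not reach $\hat D_\infty$ within $T$ steps, which gives $\tau_0^{\sigmahat}>T$ directly.

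In effect, the argument only needs the inequality truncated at horizon $T$. I would phrase the proof around this truncated statement, which covers both cases at once.

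Some care is needed to pass from ``$\decoder(\hat x_t)\notin D_\infty$'' to ``$\hat x_t\notin\encoder(D_\infty)$''. This step uses $\decoder\circ\encoder=\mathrm{id}$ on the relevant points, as in the proof of Lem.~\ref{lem:perfect_representation}, and I would state it as an implicit assumption. It also requires that the infimum in $\tau_0^{\sigmahat}$ ranges over initial states in $\hat D_i=\encoder(D_i)$, each of which arises from some $x_0\in D_i$. There is a mismatch between $D_i$ and $D_i^{-\varepsilon}$ in the two definitions. I would handle it as the lemma does, by matching the latent infimum over $\encoder(D_i)$ against the original infimum and invoking the lemma's conclusion as stated.

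Finally, the $\varepsilon$-separation of the candidate sets and the radius conditions are inherited. For tolerance $0$, the separation requirement in the latent space reduces to the $\hat D_i$ being disjoint. I would note that this follows from the original separation whenever $\encoder$ is injective on $S$, which is again implied by $\decoder\circ\encoder=\mathrm{id}$.
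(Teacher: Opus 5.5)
Your route is the paper's: its proof applies Lem.~\ref{lem:imperfect_time_scale_transfer} with $I=D_i$, $F=D_\infty$ for each $i$. You are right that it never checks the hypothesis $\tau_0^\Sigma(D_i,D_\infty)\le T$, which metastability does not provide. However, the mismatch you raise at the end and then defer is a genuine gap, not a bookkeeping issue.

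\begin{itemize}
\item Your truncated argument only shows that latent trajectories started in $\encoder(D_i^{-\varepsilon})$ avoid $\hat D_\infty$ up to time $T$.
\item But $\tau_0^{\sigmahat}(\hat D_i,\hat D_\infty)$ is an infimum over all of $\hat D_i=\encoder(D_i)\supseteq\encoder(D_i^{-\varepsilon})$.
\item The hypothesis $\tau_\varepsilon^\Sigma(D_i,D_\infty)>T$ says nothing about trajectories starting in $D_i\setminus D_i^{-\varepsilon}$.
\end{itemize}

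``Handling it as the lemma does'' is not available. In your second case the lemma does not apply at all. In your first case the lemma's proof has the very same defect. Its step (1) shows only $\encoder(I^{-\varepsilon})\subseteq\hat I$, which can bound the latent infimum over $\hat I$ from above but never from below.

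In fact the gap cannot be closed, because the corollary (and the lemma) are false as stated. Consider the following example.

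\begin{itemize}
\item Let $f(x)=Ax$ with $A=\begin{pmatrix}0.5&10\\0&0.4\end{pmatrix}$. This is stable and diagonalizable, and it is the mean dynamics of, e.g., $X_{t+1}=AX_t+W_t$ with zero-mean noise. Take $S$ to be a large invariant Lyapunov sublevel set.
\item Let $\encoder=\decoder=\mathrm{id}$. This is a perfect, hence $(S,T,\varepsilon)$-close, representation with $\decoder\circ\encoder=\mathrm{id}$.
\item Take $\xinf=0$, $\delta=1$, $\varepsilon=1/2$, $T=1$, $K=1$ and $D_1=[-41,-39]\times[0,2]$.
\end{itemize}

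For $x_0\in D_1^{-\varepsilon}=[-40.5,-39.5]\times[0.5,1.5]$, the first coordinate of $Ax_0$ lies in $[-15.25,-4.75]$. Hence $x_1\notin D_\infty^{+\varepsilon}=[-1.5,1.5]^2$, and $\tau_\varepsilon^\Sigma(D_1,D_\infty)\ge 2>T$. Yet $(-40,2)\in D_1$ is mapped to $(0,0.8)\in D_\infty=\hat D_\infty$, so $\tau_0^{\sigmahat}(\hat D_1,\hat D_\infty)=1=T$. Since also $\tau_0^\Sigma(D_1,D_\infty)=1\le T$, the same example refutes Lem.~\ref{lem:imperfect_time_scale_transfer}. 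So your case~1 fails as well, and so does the paper's one-line proof.

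What your truncated argument does prove, uniformly and with no case split or appeal to the lemma, is a corrected statement. Replace $\hat D_i$ by $\encoder(D_i^{-\varepsilon})$ and assume $\decoder(\encoder(y))=y$ for $y\in D_\infty$. Then $\tau_0^{\sigmahat}(\encoder(D_i^{-\varepsilon}),\hat D_\infty)>T$ for every $i$.
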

\begin{proof}
From Lem.~\ref{lem:imperfect_time_scale_transfer}, we have $\tau_{\varepsilon}^\Sigma(D_i, \Dinf) \leq \tau_{0}^{\sigmahat} (\hat{D}_i,\hat{D}_\infty)$ for every $1\leq i \leq K$. Therefore, $(T,\varepsilon,\Dinf)$-metastability of $\Sigma =(S, f)$ over the original space  with respect to the candidate sets $D_1, D_2,\dots, D_K$ implies $(T, 0, \hat D_\infty)$-metastability of $\sigmahat =(\Sh, A)$ over the latent space $\Sh=\encoder(S)$ with respect to the candidate sets $\hat D_1, \hat D_2,\dots, \hat D_K$. 

\end{proof}

At this stage, we have established that for a Markov process $(X_t)_{t\in\nats_0}$ defined over a measurable space $(S,\mathcal B(S))$ with average dynamics represented by the dynamical system $\Sigma=(S,f)$, metastability analysis can be carried out by (1) learning a high-fidelity Koopman autoencoder $\M=(\encoder,\decoder,A)$, (2) analyzing the time scales of the associated linear system $\sigmahat=(\Sh,A)$ with $\Sh=\encoder(S)$, and (3) applying Lem.~\ref{lem:imperfect_time_scale_transfer} and Cor.~\ref{cor:metastability_transfer_imperfect} to infer the time scales and metastability of $\Sigma$. 

However, the main motivation for employing the Koopman operator model $\M=(\encoder,\decoder,A)$ is to obtain a computationally tractable means for \emph{quantitative} estimation of time scales by leveraging the spectral properties of the linear operator $A$.

It is well known that the spectral gap of the matrix $A$ provides valuable insight into the characteristic time scales of the system. In particular, the following corollary holds. 

\begin{corollary}[Qualitative relationship between spectral radius and metastability]\label{col:qual_metastability_spectral_gap_rel}
    Given a high-fidelity Koopman model $\M=(\encoder,\decoder,A)$ learned for a nonlinear dynamical system $\Sigma=(S,f)$, the metastability properties of $\Sigma$ can be inferred from the spectral radius of $A$.
\end{corollary}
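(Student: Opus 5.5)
The argument has two parts. First we bound the time the latent linear system $\sigmahat=(\Sh,A)$ needs to enter a neighbourhood of its fixed point, using only $\rho(A)$ and the condition number $\kappa_\infty$ of the eigenvector matrix. Then we move that bound to $\Sigma$ with Lem.~\ref{lem:imperfect_time_scale_transfer} and Cor.~\ref{cor:metastability_transfer_imperfect}.

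\textbf{Latent settling time.} Assume $\rho(A)<1$, the stable case relevant to ergodic average dynamics. Then $\hat x_\infty=0$. By the diagonalizability assumption, $A=S\Gamma S^{-1}$, so
\[
\|A^t\hat x_0\|_\infty\le \kappa_\infty\rho(A)^t\|\hat x_0\|_\infty\le \kappa_\infty\rho(A)^t R,
\]
where $R:=\sup_{\hat x\in\Sh}\|\hat x\|_\infty<\infty$. This supremum is finite because $S$ is bounded and $\encoder$ is Lipschitz with constant $U^{\encoder}$. Hence the $\delta$-settling time of $\sigmahat$ satisfies
\[
T_\delta^{\sigmahat}\le \left\lceil \frac{\log(\kappa_\infty R/\delta)}{\log(1/\rho(A))}\right\rceil .
\]
This quantity grows like $1/(1-\rho(A))$ as $\rho(A)\to 1$. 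Conversely, if we start on the dominant eigendirection $v_1$, scaled to lie in $\Sh$, we get $\|A^t v_1\|_\infty=|\lambda_1|^t\|v_1\|_\infty$. This gives a matching lower bound on the time to reach any $\hat D_\infty$ that excludes a ball around that start. So the latent time scales are governed, up to the logarithmic factor $\log\kappa_\infty$, by $\rho(A)=|\lambda_1|$.

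\textbf{Transfer to $\Sigma$.} Choose the latent target neighbourhood so that it sits inside $\encoder(\Dinf)$. Using the constants $C^{\decoder}$ and $U^{\decoder}$, a $\delta'$-ball in latent space with $\delta'=\delta/U^{\decoder}$ decodes into $\Dinf$, shifted by the model error. With $(S,T,\varepsilon)$-closeness, the decoded latent trajectory stays within $\varepsilon$ of the true trajectory. Therefore the latent settling bound yields an upper bound on the time for $\Sigma$ to enter $\Dinf^{+\varepsilon}$, which upper-bounds $\tau_\varepsilon^\Sigma(D_i,\Dinf)$. Conversely, Lem.~\ref{lem:imperfect_time_scale_transfer} and Cor.~\ref{cor:metastability_transfer_imperfect} show that metastability of $\Sigma$ forces large latent hitting times. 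By the bound above, large latent hitting times are only possible when $\rho(A)$ is close to $1$, specifically when $\rho(A)^T\gtrsim \delta/(\kappa_\infty R)$. Put together: a small $\rho(A)$ rules out $(T,\varepsilon,\Dinf)$-metastability, and $\rho(A)$ near $1$ allows long transients.

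\textbf{Main obstacle.} The difficulty is that the corollary is informal ("can be inferred"). Converting the one-sided implications into a genuine two-sided relation needs control of the encoder/decoder distortion, via $C^{\encoder}>0$ and $U^{\decoder}$, and of $\kappa_\infty$. I would state the result as the explicit contrapositive bound
\[
T<\frac{\log(\kappa_\infty R/\delta')}{\log(1/\rho(A))}
\]
as a necessary condition for metastability. I would treat the converse direction, that $\rho(A)$ near $1$ actually produces metastability in $\Sigma$, only qualitatively, via the dominant-eigenvector initial condition.
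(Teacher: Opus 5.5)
Your route is essentially the paper's. The paper gives no separate proof of this corollary. Its quantitative support is Lem.~\ref{lem:settling_time_lb_lin}, which uses $\|A^t\|_\infty\le\kappa_\infty\rho(A)^t$, followed by the transfer in Thm.~\ref{thm:delta_settling_time_imperfect_rep}. Your $R=\sup_{\hat x\in\Sh}\|\hat x\|_\infty$ is actually the right constant there, since the paper's step $\|x_0\|_\infty\le D(\Sh)$ tacitly assumes $0\in\Sh$. Two steps of your sketch fail as written, though.

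\textbf{The eigendirection lower bound.} Your ``matching lower bound'' from the dominant eigendirection is unjustified, for three reasons.
\begin{itemize}
\item $\Sh=\encoder(S)$ is the image of an $n$-dimensional set in $\reals^m$ (in the experiments, $2$ into $20$ dimensions), so it need not meet $\operatorname{span}(v_1)$ at all.
\item $\lambda_1$ may be complex, so no real multiple of $v_1$ lies in $\Sh$.
\item If the encoded states have (almost) no component along the dominant mode, $\rho(A)$ says nothing about their transients. Since $A$ is fitted only on encoded trajectories, an unexcited mode with $|\lambda_1|\approx 1$ is not ruled out.
\end{itemize}
The valid version uses the left eigenvector $w_1$ of $\lambda_1$. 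From $w_1^\top A^t\hat x_0=\lambda_1^t\,w_1^\top\hat x_0$ you get $\|A^t\hat x_0\|_\infty\ge|\lambda_1|^t\,|w_1^\top\hat x_0|/\|w_1\|_1$. So a lower bound on latent hitting times requires $\inf|w_1^\top\hat x_0|$ over the encoded candidate set to be bounded away from zero. Without such a hypothesis only the upper-bound direction holds, and that is also all the paper establishes.

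\textbf{The transfer step.} Your transfer step conflates two latent targets. Cor.~\ref{cor:metastability_transfer_imperfect} gives large zero-tolerance hitting times of $\encoder(\Dinf)$, while your spectral estimate controls entry into $\ball_{\delta'}(0)$.
\begin{itemize}
\item No such ball can sit inside $\encoder(\Dinf)$ when $m>n$, because a Lipschitz image of a subset of $\reals^n$ is Lebesgue-null in $\reals^m$.
\item Latent trajectories of an imperfect model can leave $\Sh$ and converge to $0$ without ever hitting $\encoder(\Dinf)$. That hitting time can then be infinite whatever $\rho(A)$ is.
\end{itemize}
So the ``conversely'' chain through Lem.~\ref{lem:imperfect_time_scale_transfer} does not force $\rho(A)\approx 1$.

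Your direct route is the right one, but it has its own hole. Decoding $\ball_{\delta'}(0)$ lands near $\decoder(0)$, and finite-horizon closeness does not by itself relate $\decoder(0)$ to $\xinf$. You can close this with the fixed point:
\begin{itemize}
\item If $f$ is continuous, then $f(\xinf)=\xinf$ with $\xinf\in S$.
\item If $\kappa_\infty R\,\rho(A)^T<\delta'$, closeness at $x_0=\xinf$, $t=T$ gives $\|\xinf-\decoder(0)\|_\infty\le\varepsilon+U^{\decoder}\delta'$.
\item Hence $\|x_T-\xinf\|_\infty<2U^{\decoder}\delta'+2\varepsilon$ for every $x_0\in S$.
\item Taking $\delta'=(\delta-\varepsilon)/(2U^{\decoder})$ puts $x_T$ in $\Dinf^{+\varepsilon}=\ball_{\delta+\varepsilon}(\xinf)$.
\end{itemize}
Thus $(T,\varepsilon,\Dinf)$-metastability forces $\kappa_\infty R\,\rho(A)^T\ge(\delta-\varepsilon)/(2U^{\decoder})$. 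This bypasses Lem.~\ref{lem:imperfect_time_scale_transfer} and needs no lower distortion constant such as $C^{\encoder}$. It does need $\decoder$ to be Lipschitz on the region swept by latent trajectories, not just on $\Sh$.
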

The result stated in Corollary~\ref{col:qual_metastability_spectral_gap_rel} is qualitative. Intuitively, this means that long time scales of $f$ correspond to $A$ having a spectral radius close to one ($\rho(A)\approx 1$). 

\begin{remark}
As shown in Lem.~\ref{lem:expectation_convergence}, when $\Sigma=(S,f)$ describes the evolution of an ergodic Markov process, it takes a unique steady state value. Therefore, we only consider stable dynamics over the latent space, that is $\hat\Sigma=(\Sh,A)$ is a stable linear dynamical system.
\end{remark}

The metastability characterization in Def.~\ref{def:metastability_def_main} is suitable for the case where we have a number of candidate metastable sets together with a threshold for large time scales and would like to assess metastability with respect to them. In practice, we may not have access to the candidate sets and prefer an alternative characterization that tells us whether converging to the equilibrium is fast or not: such characterization basically can be considered as an approximation of \emph{mixing time} for the underlying system. To this end, we use \emph{$\delta$-settling time}, and notice that for every choice of candidate sets $D_1, D_2,\dots,D_K$, and $0<\varepsilon\leq \delta$, we have
\begin{equation}\label{eq:comparison_delta_settling_time}
    T_{\delta}^\Sigma\geq \tau_\varepsilon^\Sigma(D_i,D_\infty).
\end{equation}
Note that $\tau_\varepsilon^\Sigma(D_i,D_\infty)$ takes the effect of $\delta$ since $D_\infty$ is defined with respect to $\delta$. Therefore, if $\delta$-settling time for a system is large, one can expect existence of candidate set(s) starting from which it takes a large amount of time to reach to $D_\infty$.  

The following lemma relates the settling time of a linear matrix $A$ to its spectral radius as well as diameter of the state space.

\begin{lemma}[Quantitative relation between spectral radius and time scales for linear stable systems]\label{lem:settling_time_lb_lin}
    Let $\hat\Sigma=(\Sh, A)$ be a stable diagonalizable linear dynamical system, where $\Sh\subset\reals^m$ is a compact and connected set. Let $\lambda_1,\lambda_2,\dots,\lambda_m$ denote the eigenvalues of $A$, ordered such that $|\lambda_m|\leq |\lambda_{m-1}|\leq \dots \leq|\lambda_1|$. 
    Then we have
    \begin{align}
        T_{\hat\delta}^{\hat\Sigma}(A,\Sh) \leq
        \frac{\log(\kappa_\infty D(\Sh)/\hat\delta)}{-\log(\rho(A))}.
    \end{align}
\end{lemma}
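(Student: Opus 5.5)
The plan is to combine the norm bound $\|A^t\|_\infty \le \kappa_\infty \rho(A)^t$ from the preliminaries with a bound on $\|\hat x_0\|_\infty$ in terms of $D(\Sh)$, and then solve for the first $t$ at which the bound drops below $\hat\delta$.

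By the diagonalizability assumption, $A = S\Gamma S^{-1}$, so $A^t = S\Gamma^t S^{-1}$. Since $\Gamma$ is diagonal, $\|\Gamma^t\|_\infty = \max_i |\lambda_i|^t = \rho(A)^t$. Submultiplicativity of the induced $\ell_\infty$ norm then gives $\|A^t\|_\infty \le \kappa_\infty \rho(A)^t$. For any $\hat x_0 \in \Sh$, by the Remark defining the linear settling time we have $\hat x_\infty = 0$ and
\[
\|A^t \hat x_0\|_\infty \le \kappa_\infty \rho(A)^t \|\hat x_0\|_\infty .
\]

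\textbf{Main obstacle: bounding $\|\hat x_0\|_\infty$ by $D(\Sh)$.} This needs $0 \in \Sh$, or at least that $\Sh$ contains the equilibrium. I would justify this as follows. The set $\Sh=\encoder(S)$ contains the encoded steady state $\encoder(\xinf)$. Under the latent dynamics the trajectory converges to $0$, and $\Sh$ is compact, so it is natural to take it closed and invariant with $0\in\Sh$. I would state this explicitly as the standing interpretation, namely that the latent equilibrium $0$ lies in $\Sh$, which is consistent with the preceding Remark that $\hat\Sigma$ is stable with its steady state in the latent space. Granting this, $\|\hat x_0\|_\infty = \|\hat x_0 - 0\|_\infty \le D(\Sh)$ by the definition \eqref{eq:absolute_diameter}. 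If this step cannot be justified, the fallback is to replace $D(\Sh)$ by $\sup_{\hat x\in\Sh}\|\hat x\|_\infty$, which yields the same form of bound.

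With this in hand,
\[
\|A^t\hat x_0\|_\infty \le \kappa_\infty D(\Sh)\rho(A)^t ,
\]
and the right-hand side is strictly less than $\hat\delta$ as soon as
\[
t > \frac{\log(\kappa_\infty D(\Sh)/\hat\delta)}{-\log\rho(A)} .
\]
Here we use $0<\rho(A)<1$, so that $-\log\rho(A)>0$; the degenerate case $\rho(A)=0$ gives $A^t=0$ for $t\ge1$ and is trivial. Hence the minimal such $t$ for each $\hat x_0$ is at most the smallest integer strictly exceeding this quantity. Taking the supremum over $\hat x_0\in\Sh$ gives the claimed bound.

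Two bookkeeping points remain. First, there is a possible off-by-one from the strict inequality and integer rounding; I would treat the bound as holding up to the ceiling, or note that equality occurs only on a measure-zero set of parameters. Second, if $\kappa_\infty D(\Sh)\le\hat\delta$, the bound is nonpositive and every state already satisfies the settling condition at $t=1$; this case is likewise handled by the ceiling convention.
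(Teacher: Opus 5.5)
Your argument is the same as the paper's proof: bound $\|A^t\hat x_0\|_\infty\le\|A^t\|_\infty\|\hat x_0\|_\infty\le\kappa_\infty D(\Sh)\rho(A)^t$ and solve for $t$; the paper passes silently over both points you flag. The $0\in\Sh$ step needs no extra interpretation, because the paper's definition of a dynamical system $(\Sh,A)$ requires $A\Sh\subseteq\Sh$, so each orbit $A^t\hat x_0\to 0$ stays in the closed set $\Sh$ and hence $0\in\Sh$. The rounding issue, by contrast, is a genuine defect of the statement and not a measure-zero effect: for example, $m=1$, $A=\tfrac12$, $\Sh=[0,1]$, $\hat\delta=\tfrac13$ gives $T_{\hat\delta}^{\hat\Sigma}=2>\log 3/\log 2$, so what you and the paper actually prove is $T_{\hat\delta}^{\hat\Sigma}\le\max\{1,\lfloor r\rfloor+1\}$, where $r$ is the stated right-hand side.
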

\begin{proof}
Since $A$ is stable, all its eigenvalues satisfy $|\lambda_i|<1$ for $1\le i\le m$, and hence $\rho(A)=|\lambda_1|$. Using Eq.~\eqref{eq:absolute_diameter}, we get that for every $x_0 \in \Sh$
 \begin{align*}
 &\|A^tx_0\|_\infty\leq \|x_0\|\|A^t\|_\infty\leq D(\Sh)\|A^t \|_\infty
 \leq \kappa_\infty D(\Sh)|\lambda_1|^t.
 \end{align*}
From the above, it suffices to ensure that $\kappa_\infty D(\Sh)|\lambda_1|^{t} \leq \hat\delta$.
We obtain $t \geq \frac{\log(\kappa_\infty D(\Sh)/\hat\delta)}{-\log(\rho(A))}$. Hence, the claim follows.
\end{proof}


Now that we know how knowledge of $\rho(A)$ in stable linear dynamical systems, we would like to have a \emph{quantitative} result that relates the time scales of $f$ and $A$. To that end, we are going to exploit characteristics of the encoding and decoding transformations. 
\begin{lemma}
Let $\M=(\encoder,\decoder,A)$ be a perfect representation for a dynamical system $\Sigma=(S,f)$. Let $L_e=U^{\encoder}$ and $L_e'=C^{\encoder}$. 
Then, we have 
\begin{equation}
\label{eq:ubound}
T_\delta^\Sigma\leq \frac{\log(2\kappa_\infty L_eD(S)/L_e'\delta)}{-\log(\rho(A))}.
\end{equation}
\end{lemma}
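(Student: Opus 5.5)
We transfer the settling-time question to the latent space, apply the spectral estimate there, and map back using the two Lipschitz-type constants of the encoder. Fix $x_0\in S$ and write $\hat x_0=\encoder(x_0)$. Under perfect representation, $x_t=\decoder(A^t\hat x_0)$, and since $\encoder=\decoder^{-1}$ on the relevant image (as in the proof of Lem.~\ref{lem:perfect_representation}), we have $\encoder(x_t)=A^t\hat x_0$. Stability of $A$ gives $A^t\hat x_0\to 0$. Since $x_t\to\xinf$, continuity (finite $L_e$) gives $\encoder(\xinf)=0$; without loss of generality we treat the latent equilibrium as the origin.

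\textbf{Step 1 (lower Lipschitz bound).} The definition of $C^{\encoder}$ gives $\|\encoder(x)-\encoder(x')\|_\infty\ge L_e'\|x-x'\|_\infty$ for all $x,x'\in S$. Applying this with $x=x_t$ and $x'=\xinf$ yields
\[
\|x_t-\xinf\|_\infty \le \frac{1}{L_e'}\,\|A^t\hat x_0\|_\infty .
\]
So it suffices to make $\|A^t\hat x_0\|_\infty< L_e'\delta$. In other words, $T_\delta^\Sigma\le T_{\hat\delta}^{\hat\Sigma}$ with $\hat\delta=L_e'\delta$, where $\hat\Sigma=(\encoder(S),A)$.

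\textbf{Step 2 (latent diameter).} Using the upper Lipschitz constant, $D(\encoder(S))\le L_e D(S)$. Lem.~\ref{lem:settling_time_lb_lin} bounds $\|A^t\hat x_0\|_\infty$ through $\|\hat x_0\|_\infty$, but its proof tacitly replaces $\|\hat x_0\|_\infty$ by the diameter. To make that replacement rigorous I will use $0=\encoder(\xinf)\in\encoder(S)$ together with the triangle inequality:
\[
\|\hat x_0\|_\infty\le \|\hat x_0-\encoder(\xinf)\|_\infty + \|\encoder(\xinf)\|_\infty .
\]
A slack factor of $2$ covers the case where the latent equilibrium is only known to lie in the image of $S$ (or, if it does not, lies within one diameter of it). This gives $\|\hat x_0\|_\infty\le 2L_eD(S)$, which is where the factor $2$ in \eqref{eq:ubound} should come from.

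\textbf{Step 3 (conclude).} Combining the steps,
\[
\|A^t\hat x_0\|_\infty\le \kappa_\infty\rho(A)^t\, 2L_eD(S).
\]
This falls below $L_e'\delta$ once $t\ge \log\!\big(2\kappa_\infty L_eD(S)/(L_e'\delta)\big)/(-\log\rho(A))$. Taking the supremum over $x_0\in S$ gives \eqref{eq:ubound}.

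The main obstacle is Step 2: justifying the bound on $\|\hat x_0\|_\infty$ in terms of $D(S)$ with exactly the constant $2$. This requires pinning down where the latent equilibrium sits relative to $\encoder(S)$. I would first try the argument via $\encoder(\xinf)=0$ and the triangle inequality above. If $\xinf\notin S$ in general (for instance, if $S$ is not closed), I would instead pass to the closure of $S$, which has the same diameter.
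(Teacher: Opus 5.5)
Your argument is sound, up to an integrality issue that the paper's proof shares (noted at the end). It is organized differently from the paper's proof.

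\textbf{The paper's route.} It sets $\hat\delta\coloneqq\sup_{x,x'\in\Dinf}\|\encoder(x)-\encoder(x')\|_\infty$, the latent diameter of the image of the target ball, and records $L_e'\delta\le\hat\delta\le 2L_e\delta$. It then asserts $T_\delta^\Sigma\le T_{\hat\delta/2}^{\hat\Sigma}$ and applies Lem.~\ref{lem:settling_time_lb_lin} with tolerance $\hat\delta/2$. Only then does it replace $\hat\delta$ by $L_e'\delta$ and $D(\Sh)$ by $L_eD(S)$. So in the paper the factor $2$ comes from the latent tolerance being half a diameter, not from the size of $\hat x_0$ as you guessed.

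\textbf{Your route.} You apply the co-Lipschitz constant pointwise along the orbit, which gives the latent tolerance $L_e'\delta$ directly. This is more rigorous in two places. First, it proves the transfer from latent to original settling time, which the paper only asserts: the co-Lipschitz bound guarantees it only for latent tolerances up to $L_e'\delta$, whereas the paper's bounds allow $\hat\delta/2$ to be as large as $L_e\delta$. Second, through $\encoder(\xinf)=0$ it justifies the estimate $\|\hat x_0\|_\infty\le D(\Sh)$, which the proof of Lem.~\ref{lem:settling_time_lb_lin} uses tacitly. In exchange, the paper's version passes through the latent settling time $T_{\hat\delta}^{\hat\Sigma}$, which is the quantity it later measures.

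\textbf{Step 2 needs no factor $2$.} It is not actually an obstacle. Since $\encoder(x_s)=A^s\hat x_0\to 0$ with every $x_s\in S$, you get $\|\hat x_0\|_\infty=\lim_{s\to\infty}\|\encoder(x_0)-\encoder(x_s)\|_\infty\le L_eD(S)$. Likewise, $\|x_t-\xinf\|_\infty=\lim_{s\to\infty}\|x_t-x_s\|_\infty\le\|A^t\hat x_0\|_\infty/L_e'$. Arguing with these limits makes three things unnecessary: evaluating $\encoder$ at $\xinf$, passing to the closure of $S$, and the hedge about the equilibrium lying within one diameter of $\encoder(S)$. It also yields the sharper bound with $\kappa_\infty L_eD(S)/(L_e'\delta)$ inside the logarithm.

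\textbf{The integrality issue.} Keep the spare $2$ anyway. Let $B$ be the right-hand side of \eqref{eq:ubound}. Your Step 3, like the paper, jumps from ``the estimate holds for all $t\ge B$'' to $T_\delta^\Sigma\le B$. Since $T_\delta^\Sigma$ is an integer, this only gives $T_\delta^\Sigma\le\lceil B\rceil$. With your sharper bound, the extra term $\log 2/(-\log\rho(A))$ is at least $1$ when $\rho(A)\ge 1/2$. That absorbs both the rounding and the strict inequality in the definition of $T_\delta^\Sigma$, whenever $\kappa_\infty L_eD(S)\ge L_e'\delta$.

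For small $\rho(A)$ the stated inequality can genuinely fail. Take $S=[0,1]$, $f(x)=0.1x$, identity encoder and decoder, $A=0.1$ and $\delta=0.005$. Then $T_\delta^\Sigma=3$, while the right-hand side equals $\log_{10}400\approx 2.6$.
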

\begin{proof}
    Let $\hat D_\infty =\encoder(D_\infty)$. We define 
     \begin{align*}
     \hat\delta &= \sup_{\hat x, \hat x' \in \encoder(D_\infty)}\|\hat x - \hat x' \|_\infty
     =\sup_{ x,x' \in D_\infty}\|\encoder(x) - \encoder(x') \|_\infty,
     \end{align*}
    which gives
     \[
     L_e' \delta\leq\hat\delta \leq 2L_e \delta.
     \]
    We know that under perfect representation regime, 
    we have
    \[
    T_{\delta}^\Sigma\leq T_{\hat\delta/2}^{\hat\Sigma}.
    \]
      
     
     Note that under perfect representation, we have $0<L_e'\leq L_e$ (since $\decoder = \encoder^{-1}$). 
    We can write
    \begin{align*}
        T_\delta^\Sigma &\leq  T_{\hat \delta/2}^{\hat \Sigma}
        \leq \frac{\log(2\kappa_\infty D(\Sh)/\hat\delta)}{-\log(\rho(A))}
        \nonumber\\&
        \leq \frac{\log(2\kappa_\infty D(\Sh)/L_e'\delta)}{-\log(\rho(A))}
        \nonumber\\&
        \leq \frac{\log(2\kappa_\infty L_eD(S)/L_e'\delta)}{-\log(\rho(A))}.
    \end{align*} 
\end{proof}
Finally, we consider the case of imperfect representation.

\begin{theorem}\label{thm:delta_settling_time_imperfect_rep}
    Let $\M=(\encoder,\decoder,A)$ be a $(S,T,\varepsilon)$-close representation for a dynamical system $\Sigma=(S,f)$. Let $L_e=U^{\encoder}$ and $L_e'=C^{\encoder}$. We have 
    \begin{equation}
        T_\delta^\Sigma \leq \frac{\log(2\kappa_\infty L_eD(S)/L_e'(\delta-\varepsilon))}{-\log(\rho(A))}.
    \end{equation}
\end{theorem}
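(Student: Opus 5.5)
The plan is to mirror the perfect-representation lemma above, with two additions: a triangle-inequality step that absorbs the closeness error $\varepsilon$ into the tolerance, and the settling-time bound of Lem.~\ref{lem:settling_time_lb_lin} applied in the latent space with a shrunk tolerance. Fix $x_0\in S$ and write $x_t=f^t(x_0)$, $\hat x_t=A^t\encoder(x_0)$ and $x_t'=\decoder(\hat x_t)$. Since $\hat\Sigma$ is stable, $\hat x_t\to 0$. I will take the decoded latent equilibrium $\decoder(0)$ as the reference point and assume, as in the perfect case, that $\decoder=\encoder^{-1}$ on $\Sh$. Then $\decoder(0)=\encoder^{-1}(0)$ plays the role of $\xinf$ up to the learning error. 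I will make this identification explicit, either by noting that $\xinf$ is the encoder's preimage of the latent fixed point or by folding the discrepancy into $\varepsilon$.

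\textbf{Step 1 (reduction to the decoded trajectory).} By $(S,T,\varepsilon)$-closeness, $\|x_t-x_t'\|_\infty\le\varepsilon$ for all $t\le T$. The triangle inequality then gives
\[
\|x_t-\xinf\|_\infty\le \varepsilon+\|x_t'-\xinf\|_\infty .
\]
So $\|x_t-\xinf\|_\infty<\delta$ holds as soon as $\|x_t'-\xinf\|_\infty<\delta-\varepsilon$. This yields $T_\delta^\Sigma\le T'_{\delta-\varepsilon}$, where $T'$ is the settling time of the decoded trajectory. The inequality needs the relevant times to lie within the horizon $T$. I will state this as the natural standing assumption, namely that $T$ exceeds the right-hand side, exactly as Lem.~\ref{lem:imperfect_time_scale_transfer} assumes $\tau_0^\Sigma\le T$. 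This gives a valid upper bound whenever the bound itself is at most $T$.

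\textbf{Step 2 (transfer to the latent space).} Next I reuse the argument of the previous lemma, now with $\delta$ replaced by $\delta':=\delta-\varepsilon>0$. Set $\hat\delta'=\sup_{x,x'\in\ball_{\delta'}(\xinf)}\|\encoder(x)-\encoder(x')\|_\infty$. The definitions of $U^{\encoder}$ and $C^{\encoder}$ in \eqref{eq:Lipschitz_const_def} give $L_e'\delta'\le\hat\delta'\le 2L_e\delta'$. Because $\decoder=\encoder^{-1}$, the condition $\hat x_t\in\ball_{\hat\delta'/2}(0)$ places $x_t'$ in $\ball_{\delta'}(\xinf)$, which gives $T'_{\delta'}\le T^{\hat\Sigma}_{\hat\delta'/2}$. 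Justifying this inclusion cleanly, with the correct constant, is the one delicate point. I would justify it using the lower Lipschitz constant in the form $\|\decoder(\hat x)-\decoder(\hat x')\|_\infty\le \|\hat x-\hat x'\|_\infty/L_e'$, just as the perfect-representation proof does.

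\textbf{Step 3 (spectral bound).} Applying Lem.~\ref{lem:settling_time_lb_lin} with tolerance $\hat\delta'/2$ gives
\[
T_\delta^\Sigma\le\frac{\log(2\kappa_\infty D(\Sh)/\hat\delta')}{-\log\rho(A)} .
\]
I then use $\hat\delta'\ge L_e'(\delta-\varepsilon)$ and $D(\Sh)\le L_eD(S)$, the latter holding because $D(\encoder(S))\le U^{\encoder}D(S)$. Since $-\log\rho(A)>0$ by stability, both substitutions preserve the inequality. This yields the claimed bound.

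The main obstacle is the bookkeeping in Steps 1 and 2. Step 1 must guarantee that the first entry time falls within the closeness horizon $T$, and Step 2 must reconcile the reference point $\decoder(0)$ with $\xinf$. Everything else is a direct substitution of $\delta-\varepsilon$ for $\delta$ in the perfect-representation proof.
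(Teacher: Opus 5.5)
Your overall route is the paper's. Its proof observes that the model entering the $(\delta-\varepsilon)$-neighborhood of the equilibrium forces $\Sigma$ into the $\delta$-neighborhood, then substitutes $\delta-\varepsilon$ for $\delta$ in \eqref{eq:ubound}. Your Step~1 states the comparison that is actually needed, namely $T^\Sigma_\delta\le T'_{\delta-\varepsilon}$ for the decoded trajectory; the paper writes $T^\Sigma_\delta\le T^\Sigma_{\delta-\varepsilon}$ instead. Your two caveats are not mere bookkeeping: the paper ignores them, and the statement fails without something like them. For instance, with $T=0$, $\encoder=\decoder=\mathrm{id}$ and $A=\tfrac12$ in dimension one, closeness holds trivially while $f$ may converge arbitrarily slowly. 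For the reference point, commit to the hypothesis $\encoder(\xinf)=0$, with $\hat x_t\in\encoder(S)$ and $\decoder=\encoder^{-1}$ there. ``Folding the discrepancy into $\varepsilon$'' does not follow from closeness on a finite horizon.

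The genuine gap is in Step~2, the point you flagged: your proposed justification does not give the inclusion. The lower constant gives $\|x_t'-\xinf\|_\infty\le\|\hat x_t\|_\infty/L_e'<\hat\delta'/(2L_e')$, and this is at most $\delta'$ only if $\hat\delta'\le 2L_e'\delta'$. But $\hat\delta'\ge 2L_e'\delta'$ whenever $\ball_{\delta'}(\xinf)\subseteq S$, and it can be as large as $2L_e\delta'$. So in general you only obtain $x_t'\in\ball_{L_e\delta'/L_e'}(\xinf)$; the inequality runs the wrong way.

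A concrete counterexample: take $\xinf=0$, $\encoder$ of slope $L_e'$ on $x<0$ and $L_e=3L_e'$ on $x\ge 0$, $A=a\in(\tfrac12,1)$, $\decoder=\encoder^{-1}$, and $S=[-R,\delta']$ with $R$ large. The decoded dynamics are $x\mapsto ax$, and $T'_{\delta'}>T^{\sigmahat}_{\hat\delta'/2}$. This flaw is inherited from the paper's proof of \eqref{eq:ubound}, which asserts $T^\Sigma_\delta\le T^{\sigmahat}_{\hat\delta/2}$ without justification.

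The repair is to bypass $\hat\delta'$ entirely. The condition $\|\hat x_t\|_\infty<L_e'\delta'$ already forces $\|x_t'-\xinf\|_\infty<\delta'$, so $T'_{\delta'}\le T^{\sigmahat}_{L_e'\delta'}$. Lem.~\ref{lem:settling_time_lb_lin}, together with $D(\Sh)\le L_eD(S)$, then gives
\[
T^\Sigma_\delta\le\frac{\log\bigl(\kappa_\infty L_eD(S)/L_e'(\delta-\varepsilon)\bigr)}{-\log(\rho(A))}.
\]
This is at most the claimed bound, with the factor $2$ being pure slack.
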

\begin{proof}
    Reaching to $\delta-\varepsilon$ neighborhood of the equilibrium of $f$ in $S$ by $\M$ guarantees reaching to its $\delta$ neighborhood by $\Sigma$. Further, we have $T_\delta^\Sigma \leq T_{\delta-\varepsilon}^\Sigma$. Substituting $\delta$ with $\delta-\varepsilon$ in \eqref{eq:ubound} completes the proof. 
\end{proof}

\section{Evaluation}\label{sec:experiments}

\subsection{Experimental Setup}

All the experiments were performed on Intel(R) Core(TM) i5-6600 CPU @ 3.30GHz Debian platform.  
For the purpose of these experiments, we developed a custom simulator that models a nonlinear queuing system with retries and timeouts, denoted as $\mathcal{S}$. The simulator is defined by the tuple $\langle T_{sim},n_{runs}, \nu,\mu,n_{retries},T_{timeout} \rangle$, where $T_{sim}$ denotes the maximum duration of a single simulation run, $n_{runs}$ represents the total number of runs, $\nu$ corresponds to the job arrival rate, $\mu$ denotes the service rate, $n_{retries}$ specifies the maximum number of reattempts for processing failed jobs, and $T_{timeout}$ defines the maximum permissible waiting time for a job to complete.
The configuration used for our experiments is $\langle T_{sim}=10000,n_{runs}=100, \nu=9.7,\mu=10,n_{retries}=3,T_{timeout}=9 \rangle$.

\subsection{Data Generation}

 The state evolves according to stochastic arrival and service processes with time-dependent rates. 
For each configuration, we simulate trajectories for $T_{sim}$ under a fixed arrival rate $\nu$
 and service rate 
$\mu$. The system state consists of queue length and latency, i.e., $x_t = (q_t,l_t)$.

We generate $100$ simulation trajectories $\omega$ to construct the training dataset $\Omega$. Each trajectory is embedded with a history length (depth) of 
$d=1$, producing autoregressive input–output pairs for learning. 
Our goal is to learn an autoencoder based model for the simulator. 
We learn an Autoencoder model $(\encoder,\decoder, A)$ using the dataset $\Omega$.

We trained the autoencoder-based model with two input (and corresponding output) dimensions.  To ensure a small reconstruction error ($\varepsilon$), the model $A$ should have sufficient number of parameters. Hence, we employ a relatively high number of latent dimensions ($20$). Furthermore, spectral normalization is applied to the matrix $A$ to prevent numerical instability or divergence during rollouts.

\subsection{Eigenvalue Analysis}
This section presents the results for evaluation of the Koopman autoencoder model on the queuing system. The results highlight the relationship between the system’s arrival rate $\nu$, the metastable behavior, and the spectral properties of the learned Koopman operator $A$.

Figure~\ref{fig:llen} illustrates the variation in the average latency and average queue length with respect to the arrival rate $\nu$. We observe that for higher $\nu$, a retry storm could lead to the \textit{sustaining effect}, where the system dynamics transition from a stable to a metastable regime.

\begin{figure}
    \centering
    \includegraphics[width=0.685\linewidth]{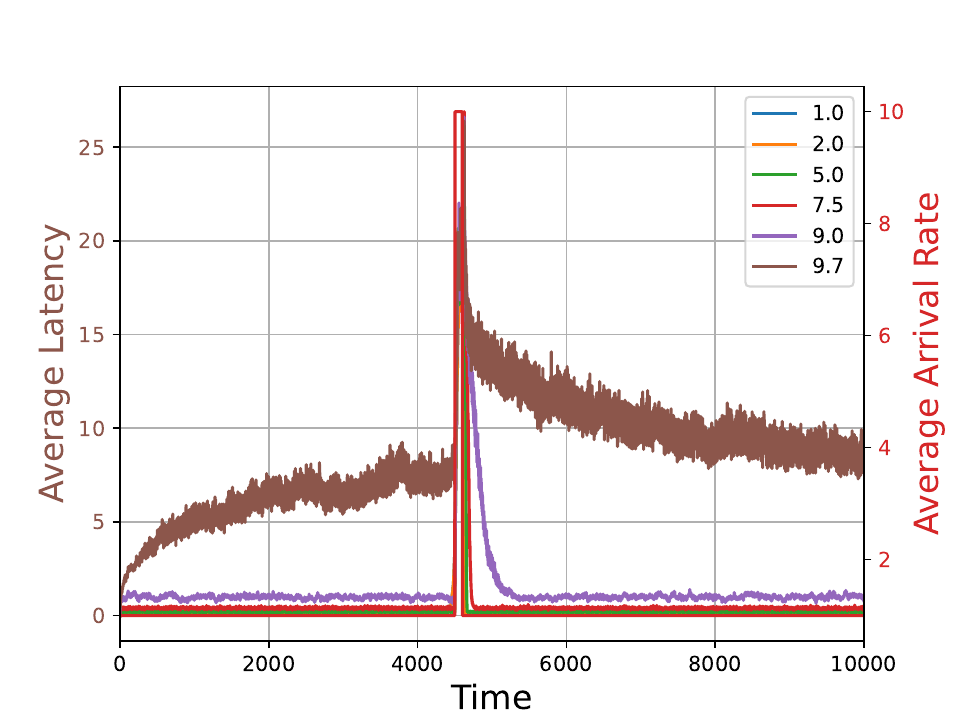}
    \includegraphics[width=0.685\linewidth]{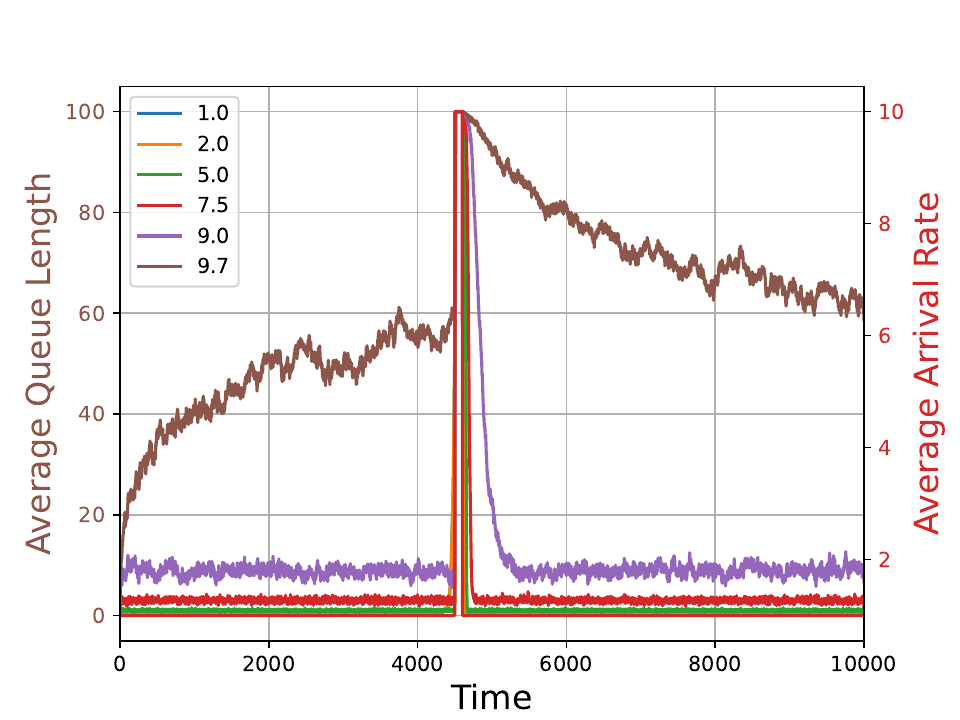}
    \caption{Average latency (top) and Average queue length (bottom) for different arrival rates.}
    \label{fig:llen}
\end{figure}


Figure~\ref{fig:learn-dyn} illustrates the predictions from the learned model, demonstrating that the learned model effectively captures the underlying system behavior.

\begin{figure}
    \centering
    \includegraphics[width=0.84\linewidth]{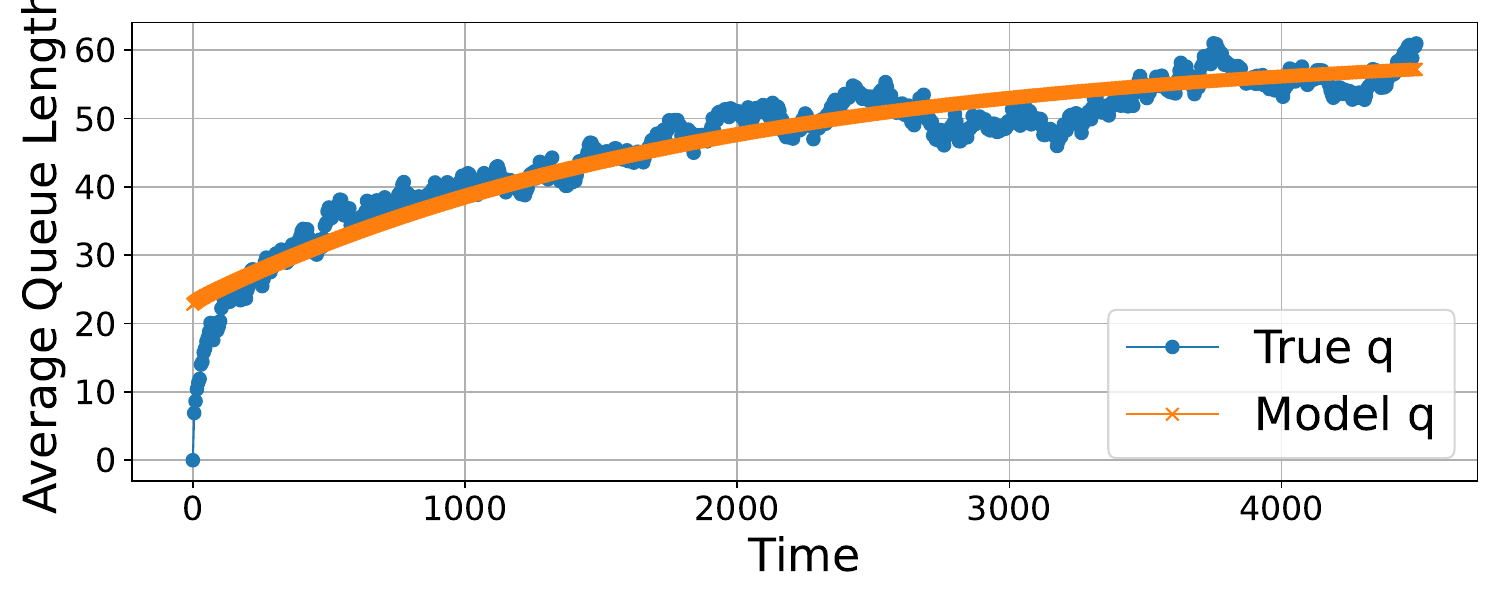}
\includegraphics[width=0.84\linewidth]{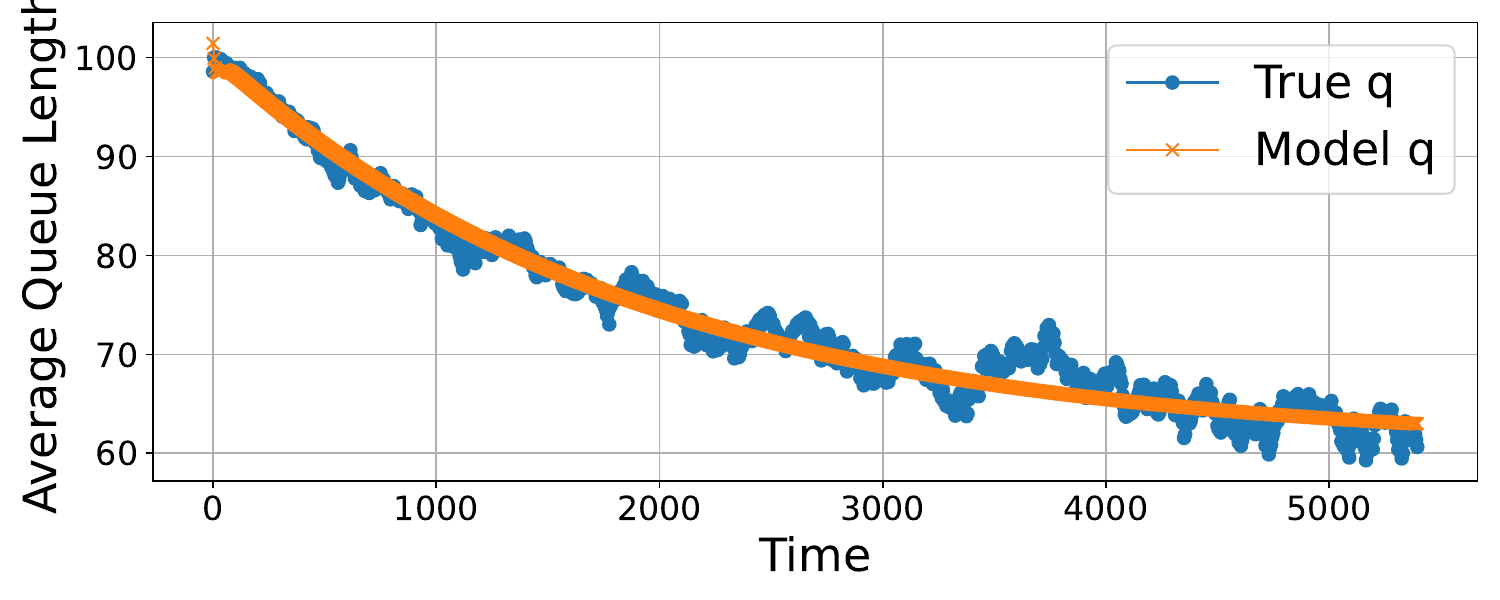}
  \caption{Illustrating the learned model performance for trajectories initialized from empty queue (top plot) and full queue (bottom plot).}
    \label{fig:learn-dyn}
    \vspace{-0.55cm}
\end{figure}
    

The largest eigenvalue of the learned Koopman matrix $A$ for different arrival rates is shown in Figure \ref{fig:eval}. We observe that the largest eigenvalue increases monotonically with the arrival rate. For low arrival rates, the dominant eigenvalue is significantly below unity, indicating fast decay toward the attractor. As $\nu$ approaches the service rate, the largest eigenvalue approaches one, reflecting the emergence of slow timescales and metastability in the system.

\begin{figure}
    \centering
    {\includegraphics[width=0.8\linewidth]{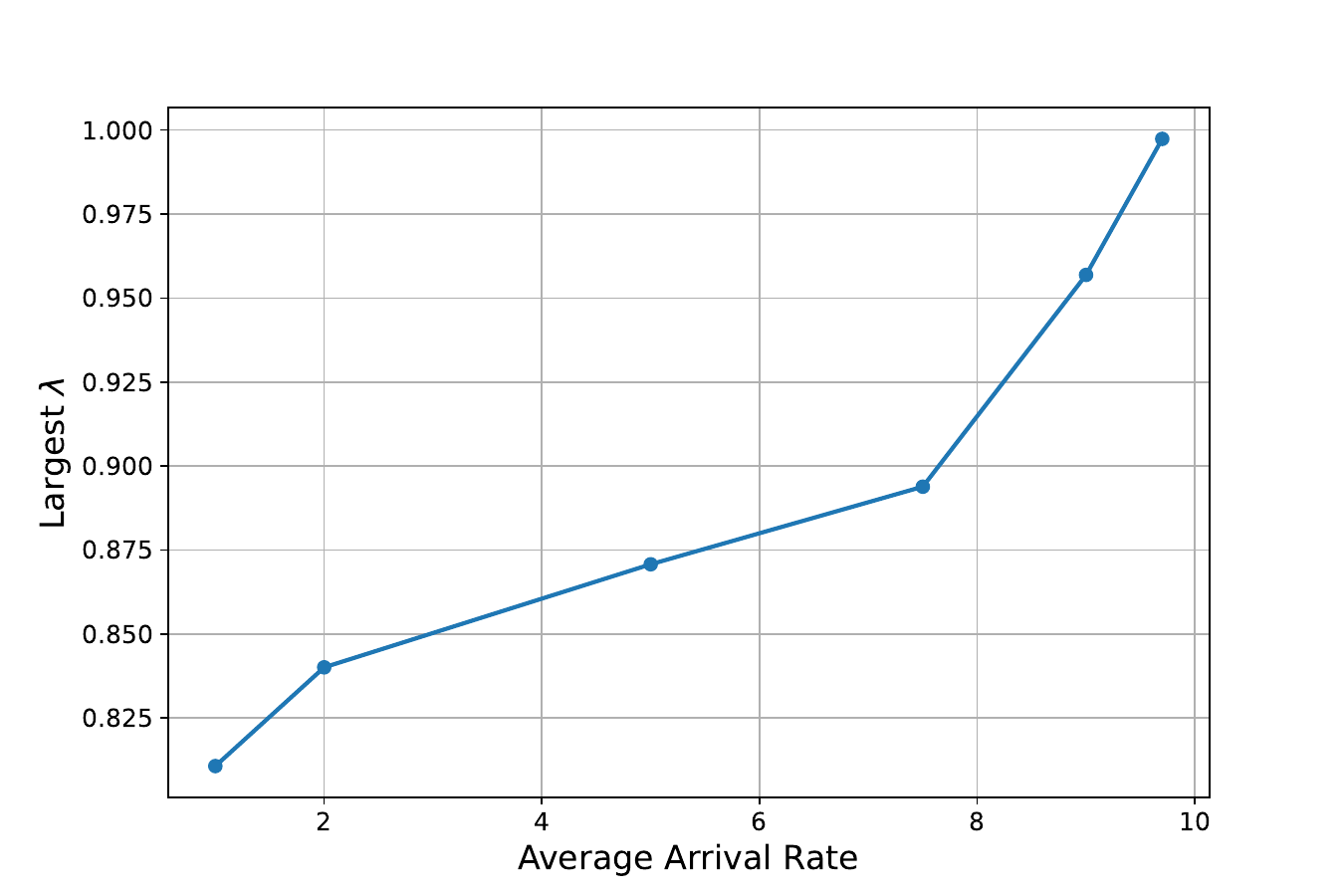}} 
    \caption{The largest eigenvalue of the Koopman matrix for different values of arrival rates.}
    \label{fig:eval}
    \vspace{-0.4cm}
\end{figure}








\subsection{Throttling}

To prevent the system from entering a metastable regime, we use a \textit{random early detection} throttling mechanism that adaptively throttles job arrivals as the queue approaches its maximum capacity. Let $q(t)$ denote the current queue length, $q_{max}$ the maximum queue capacity and $p^a$ is the admission probability. 
The throttling strategy is given as
\begin{equation}
 p^{a} =
    \begin{cases}
      1 &  q(t)<0.9q_{max}\\
      0.5 & 0.9q_{max}<q(t)\leq q_{max} \\
      0 & q(t)>q_{max}.
    \end{cases}       
\end{equation}

\begin{itemize}
    \item Normal mode: When $q(t)<0.9q_{max}$, the incoming jobs are enqueued unconditionally.
    \item Throttling mode: Once the queue occupancy exceeds $90\%$ of capacity, i.e.,  $0.9q_{max}<q(t)\leq q_{max}$, a probabilistic admission is triggered.
   Specifically, each incoming job is admitted with probability $p=0.5$ and dropped with probability $1-p$. This probabilistic intervention prevents congestion by reducing the effective arrival rate as queue gets full.
   \item Full queue: When the queue reaches full capacity, i.e., $q(t)>q_{max}$, all incoming jobs are immediately dropped.
\end{itemize}

\begin{figure}
    \centering
    {\includegraphics[width=0.684\linewidth]{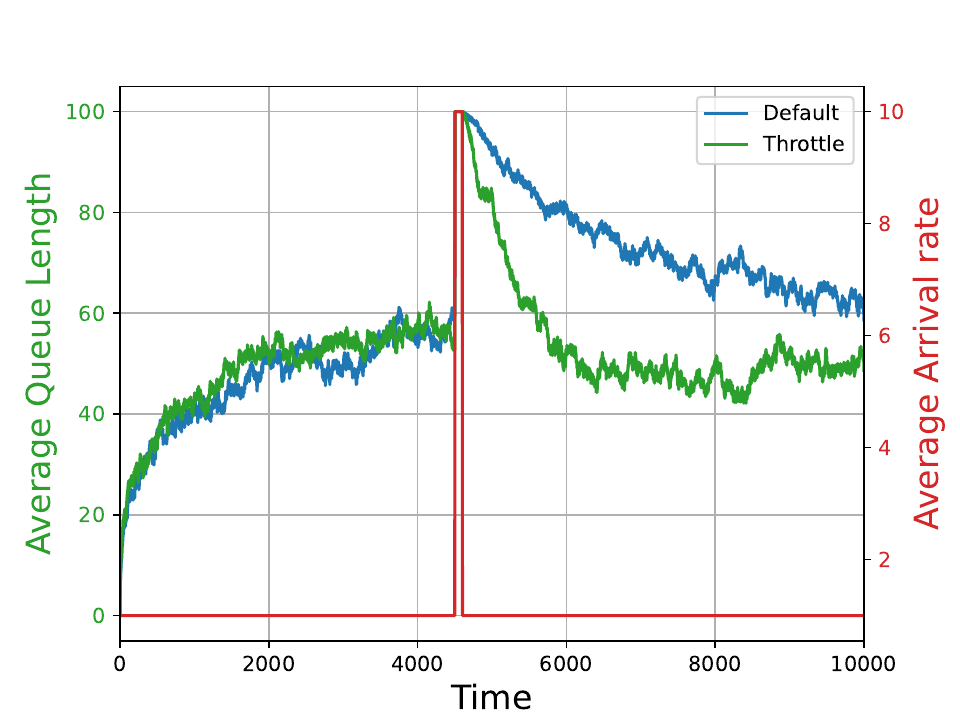}} 
    \vspace{-0.3cm}
    \caption{Illustrating the effect of throttling on the average queue length.}
    \label{fig:throt}
    \vspace{-0.4cm}
\end{figure}
In Figure~\ref{fig:throt}, we show the average queue length observed for normal simulation (default) and using the throttling strategy.

Using the trajectories obtained under throttling, we learn another autoencoder-koopman model $(\encoder^{to},\decoder^{to},A^{to})$. We observed that the largest eigenvalue for the koopman matrix $A^{to}$ if $0.9913$. Using Figure~\ref{fig:eval} and Figure~\ref{fig:llen} (bottom), we infer that throttling reduces the effective arrival rate of the system.



\subsection{LIFO Versus FIFO}

In Figure~\ref{fig:llen-lifo}, we show the comparison between latencies of Queue (FIFO) and Stack (LIFO) scheduling policies within the queuing framework.
Given that the exponential service time is memoryless, the steady-state behavior of the LIFO system is theoretically expected to be closer to that of the FIFO system. Since the newly arrived jobs see much shorter waiting times, this configuration results in higher latency variance and the potential starvation of older jobs.

In Figure~\ref{fig:qlen-lifo}, we further observe that the LIFO mechanism demonstrates superior performance relative to FIFO in recovering from the metastable state. Specifically, the largest eigenvalue of the learned Koopman operator for the LIFO-based system is $0.9915$, which is lower than the corresponding eigenvalue of $0.9973$ for the FIFO-based system, indicating a faster convergence to steady state.

\begin{figure}
    \centering
    {\includegraphics[width=0.684\linewidth]{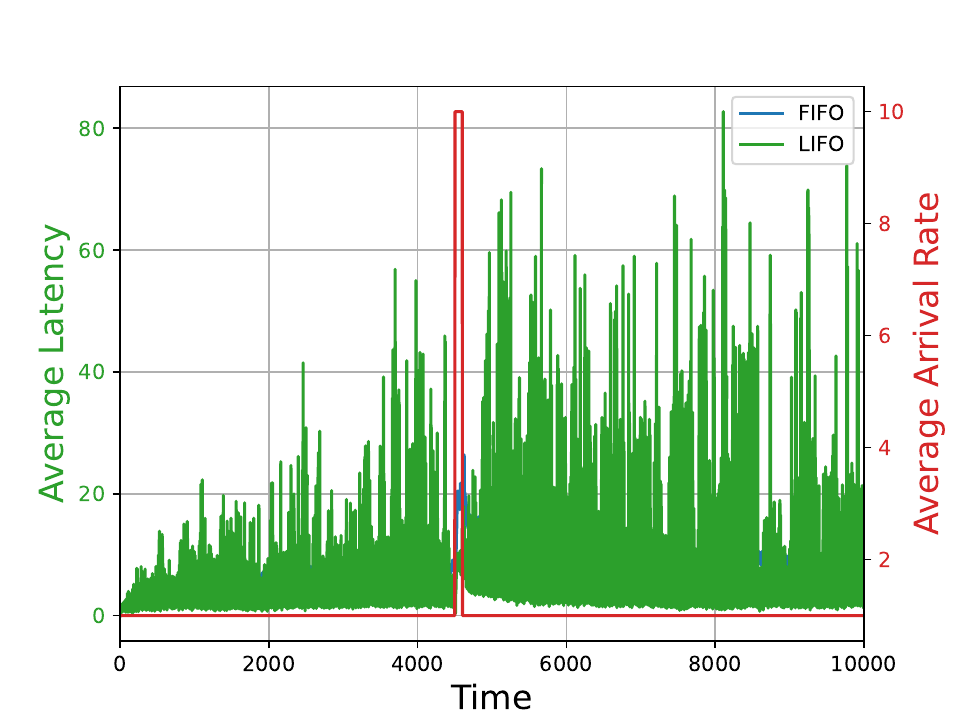}} 
    \vspace{-0.3cm}
    \caption{Comparison of average latency under LIFO and FIFO scheduling policies.}
    \label{fig:llen-lifo}
    \vspace{-0.6cm}
\end{figure}

\begin{figure}
    \centering
    {\includegraphics[width=0.684\linewidth]{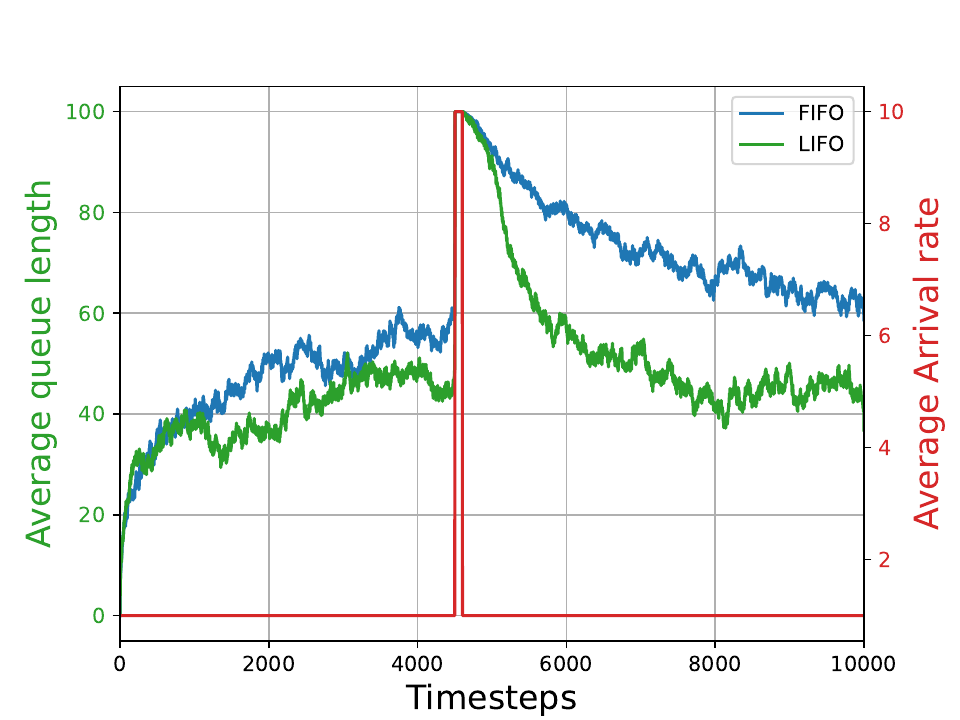}} 
    \vspace{-0.3cm}
    \caption{Comparison of average queue length under LIFO and FIFO scheduling policies.}
    \label{fig:qlen-lifo}
    \vspace{-0.2cm}
\end{figure}

\subsection{Metastability Over Shorter Simulation }

Up to this point, we trained a global model that effectively captures the metastable dynamics of the system. To further analyze metastability over shorter time scales, we learn a model over shorter duration, each corresponding to a simulation time of $1000$ timesteps, which is one-tenth of the original duration. Despite this temporal reduction, the dominant eigenvalue remains close to one ($0.9947$), indicating the preservation of metastable behavior.
Figure~\ref{fig:qlen-short} illustrates the predictions of the learned model over the short simulations.

\begin{figure}
    \centering
    {\includegraphics[width=0.84\linewidth]{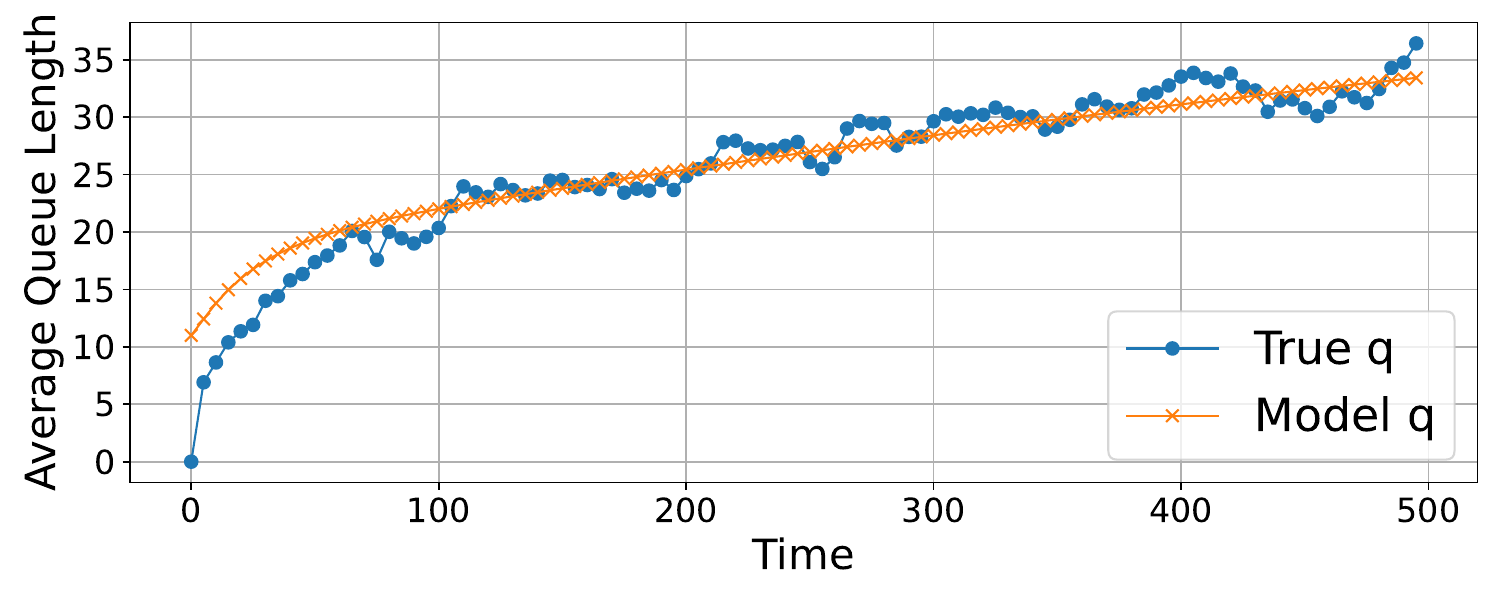}}
    \vspace{-0.3cm}
    \caption{Illustrating the performance of the learned model when trained on shorter-duration trajectories.}
    \label{fig:qlen-short}
    \vspace{-0.4cm}
\end{figure}

\subsection{Settling Times}

Figure~\ref{fig:long-sim-settle-time} (top) illustrates the comparison between the empirically observed settling times and the theoretical bounds derived from Equation~\eqref{eq:ubound} as a function of $\delta$. For these computations, we set $ L_e=1.6$, $ L_e'=0.8$, and $D(S)=103$.

\begin{figure}
    \centering
    \includegraphics[width=0.68\linewidth]{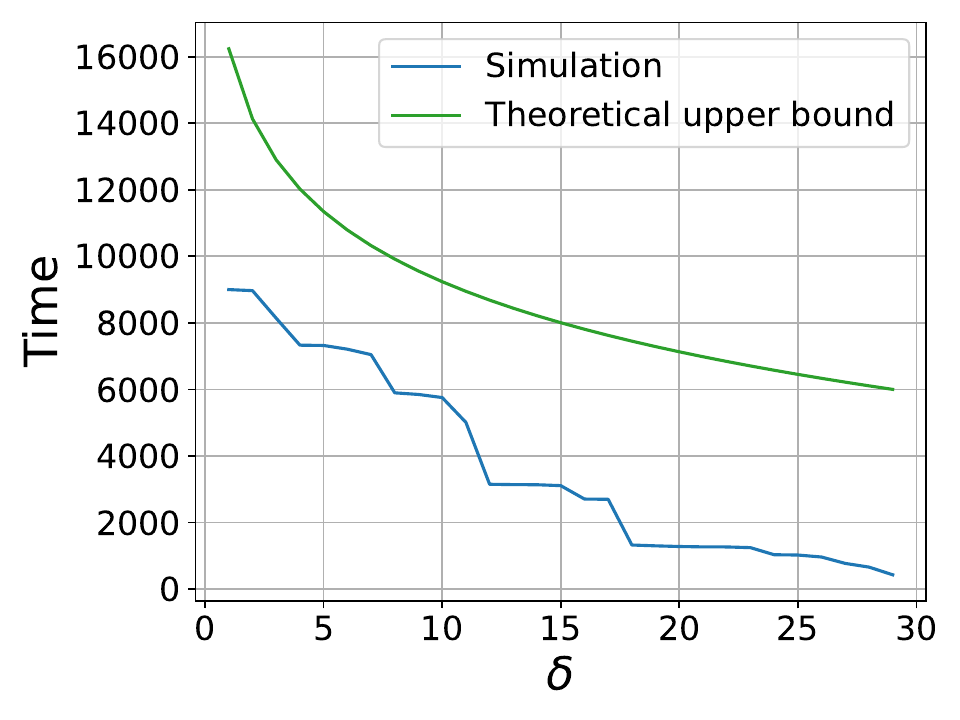} 
    \includegraphics[width=0.68\linewidth]{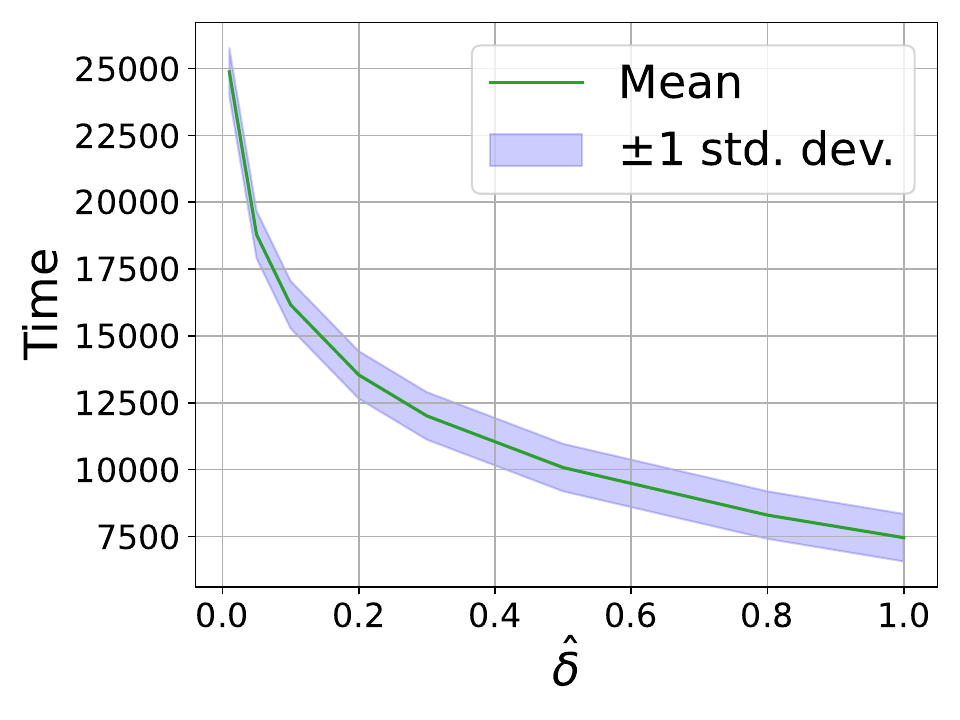}
    \vspace{-0.3cm}
    \caption{Variation in settling time in the original space as a function of $\delta$ (top plot) and in the latent space as a function of $\hat{\delta}$ (bottom plot).}
    \label{fig:long-sim-settle-time}
    \vspace{-0.4cm}
\end{figure}

For the learned model, we define the steady-state behavior as the asymptotic value obtained in the limit $t\rightarrow \infty$, as formalized in Equation~\eqref{eq:delta_settling_time_def}. In practice, since it is infeasible to simulate an infinite horizon, we approximate this limit numerically using a finite horizon of $10^5$  time steps.

Equation~\eqref{eq:delta_settling_time_def} gives the time it takes for trajectories from the learned model to get within $\hat{\delta}$ of their attractor projection.
The $\delta$-settling time  for different values of $\hat{\delta}$ is shown in Figure~\ref{fig:long-sim-settle-time} (bottom).  
 It is important to note that the error bands measured in the original space ($\delta$) differ from the corresponding tolerance levels ($\hat{\delta}$) defined in the latent space.






\subsection{Metastability in Multi Server Setting}

For simplicity, we consider a two server system in which the two servers form a \textit{tandem queuing system}. 
A job is considered complete only after it has finished execution on both servers. 
The configuration used for multi-server setting is $\langle T_{sim}=10000,n_{runs}=100, \nu=9.5,\mu=10,n_{retries}=3,T_{timeout}=9 \rangle$
and the parameters are kept identical across both servers to ensure consistency in comparison.

Figures~\ref{fig:llen-multi} and~\ref{fig:qlen-multi} illustrate the average latency and average queue lengths observed for each server in the two-server system.
From the learned system dynamics, we observe that the largest eigenvalues of the Koopman operator for servers 1 and 2 are 
$0.9832$ and 
$0.9981$, respectively. This indicates that Server 2 exhibits metastable behavior, as its dynamics evolve more slowly toward equilibrium.
This disparity in stability between the two servers can be attributed to differences in their processing times. As evident from Figure~\ref{fig:llen-multi}, the latency for Server 2 is approximately twice that for Server 1.

\begin{figure}
    \centering
    \includegraphics[width=0.684\linewidth]{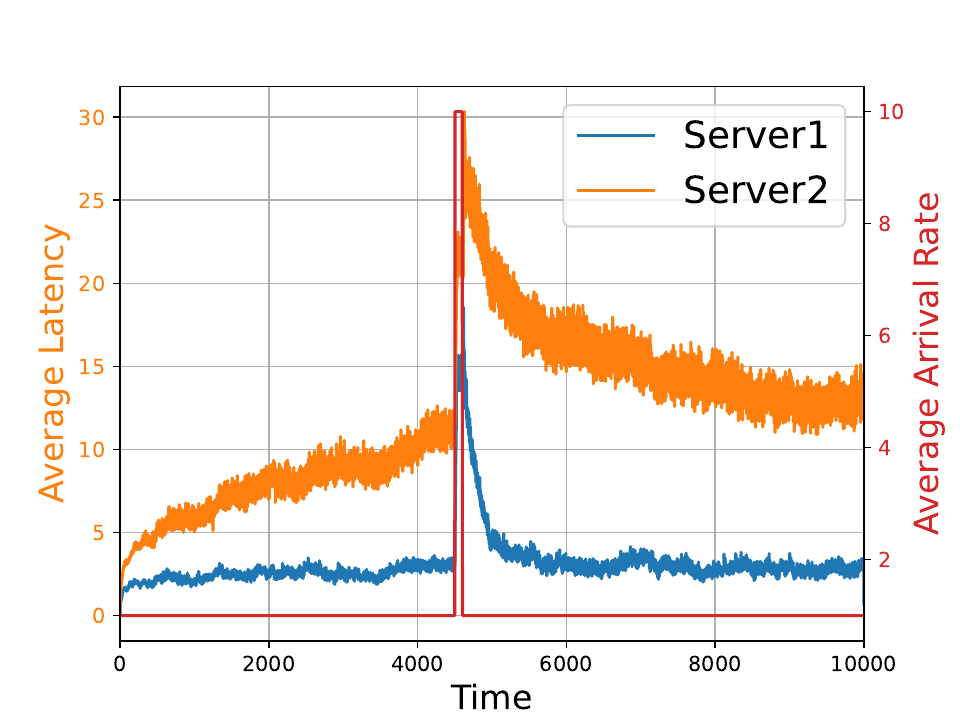} 
    \vspace{-0.3cm}
    \caption{Average latency in the two-server system.}
    \label{fig:llen-multi}
    \vspace{-0.3cm}
\end{figure}

\begin{figure}
    \centering
    \includegraphics[width=0.684\linewidth]{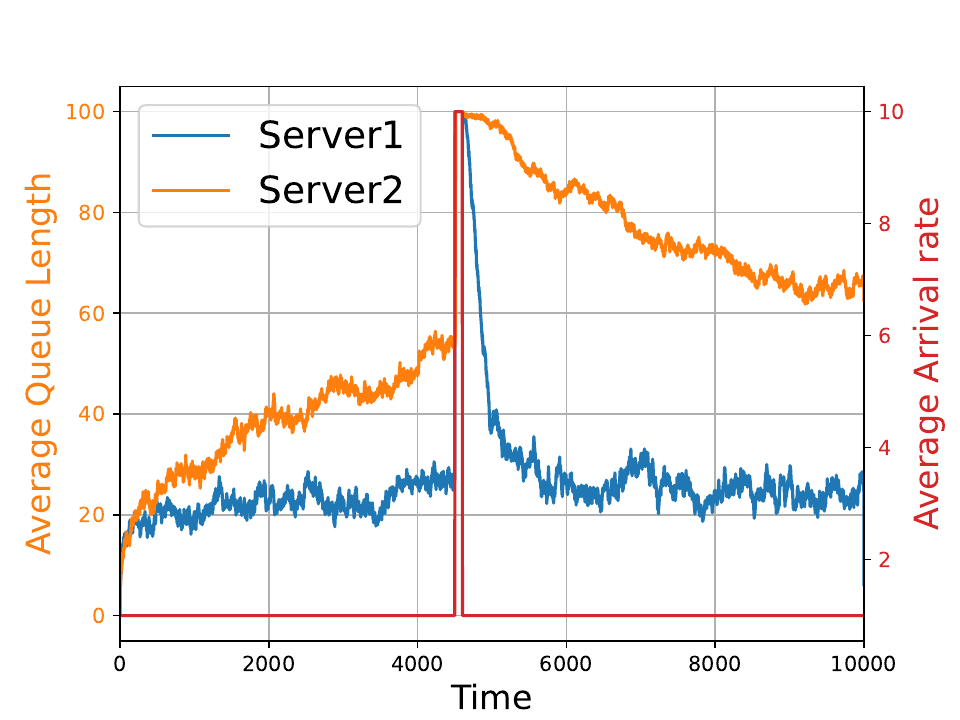}
    \vspace{-0.3cm}
    \caption{Average queue length in the two-server system.}
    \label{fig:qlen-multi}
    \vspace{-0.2cm}
\end{figure}

The $\delta$-settling times corresponding to the learned model for both servers is shown in Figure~\ref{fig:delta-times-multi}. 
Consistent with the metastable nature of Server 2, we observe larger $\delta$-settling times, indicating slower mixing and delayed convergence.

\begin{figure}
    \centering
    {\includegraphics[width=0.68\linewidth]{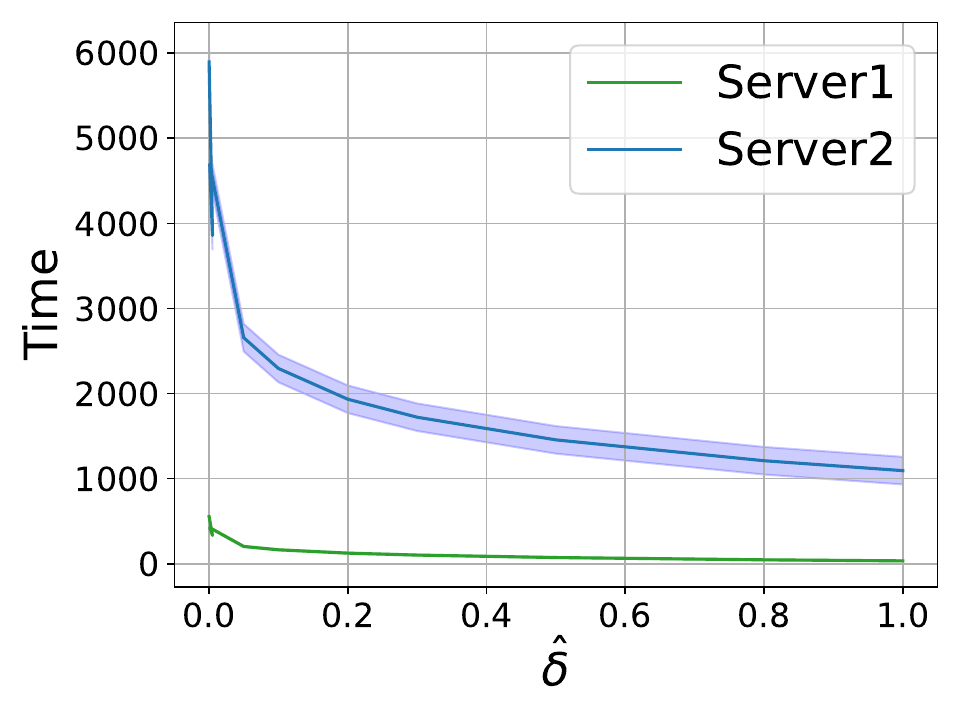}}
    \vspace{-0.6cm}
    \caption{Variation in settling time as a function of $\hat\delta$.}
    \label{fig:delta-times-multi}
    \vspace{-0.3cm}
\end{figure}

\section{Conclusion}\label{sec:conclusion}

We presented a practical framework for formal analysis of metastability in stochastic nonlinear systems with unknown dynamics. By restricting attention to ergodic stochastic systems, we showed how one can reason about metastability through the underlying deterministic dynamics that govern the evolution of state moments, such as expectations. We learned high-fidelity representations of these deterministic dynamics from finitely many system trajectories using Koopman autoencoders. Koopman autoencoders define a transformation into a latent space where the dynamics become linear. We then analyzed the spectral properties of the learned linear dynamics to reason about metastability, focusing on the eigenvalue of largest modulus. 
Empirical results demonstrated that the proposed framework effectively identifies and characterizes metastable phenomena across several distributed system case studies.
As future work, we plan to explore alternative approaches for relating stochastic dynamics to deterministic representations using moment closures~\cite{Naghnaeian17}.


\bibliographystyle{ACM-Reference-Format}
	\bibliography{references}
	

\appendix
\end{document}